\documentclass[sigconf]{acmart}

 \AtBeginDocument{%
 }

\copyrightyear{2023}
\acmYear{2023}
\setcopyright{rightsretained}
\acmConference[KDD '23]{Proceedings of the 29th ACM SIGKDD Conference on Knowledge Discovery and Data Mining}{August 6--10, 2023}{Long Beach, CA, USA}
\acmBooktitle{Proceedings of the 29th ACM SIGKDD Conference on Knowledge Discovery and Data Mining (KDD '23), August 6--10, 2023, Long Beach, CA, USA}
\acmDOI{10.1145/3580305.3599434}
\acmISBN{979-8-4007-0103-0/23/08}


\makeatletter
\gdef\@copyrightpermission{
  \begin{minipage}{0.3\columnwidth}
   \href{https://creativecommons.org/licenses/by/4.0/}{\includegraphics[width=0.90\textwidth]{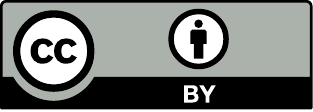}}
  \end{minipage}\hfill
  \begin{minipage}{0.7\columnwidth}
   \href{https://creativecommons.org/licenses/by/4.0/}{This work is licensed under a Creative Commons Attribution International 4.0 License.}
  \end{minipage}
  \vspace{5pt}
}
\makeatother

\usepackage{amsmath}
\usepackage{amsthm}
\usepackage{tikz}
\usepackage{graphicx}
\usepackage{hyperref}
\usepackage{epstopdf}
\usepackage{bbm}
\usepackage{enumerate}
\usepackage{footnote}
\usepackage{url}
\usepackage{algorithm}
\usepackage[noend]{algpseudocode}
\usepackage{xspace}
\usepackage[caption=false,font=small]{subfig}
\usepackage{balance}

\usetikzlibrary{positioning} 
\usetikzlibrary{patterns}
\usetikzlibrary{fit,shapes.geometric}
 
\DeclareMathOperator*{\argmin}{arg\,min}	
\DeclareMathOperator*{\argmax}{arg\,max}
\DeclareMathOperator{\E}{\mathbb{E}}

\newcommand{\BMMH}{\textsf{BMMH}\xspace}
\newcommand{\BMAH}{\textsf{BMAH}\xspace}
\newcommand{\greedy}{\texttt{Greedy}\xspace}
\newcommand{\greedyplus}{\texttt{Greedy+}\xspace}

\newcommand{\bigO}{\ensuremath{\mathcal{O}}\xspace}
\newcommand{\NP}{\ensuremath{\mathbf{NP}}\xspace}

\newtheorem{cor}{\textbf{Corollary}}
\newtheorem{fact}{\textbf{Fact}}
\newtheorem{defi}{\textbf{Definition}}   
\newtheorem{theorem}{\textbf{Theorem}}    
\newtheorem{lemma}{\textbf{Lemma}}     
\newtheorem{proposition}{\textbf{Proposition}}
\newtheorem{prob}{\textbf{Problem}}
\newtheorem{obs}{\textbf{Observation}}

\newcommand{\norm}[1]{\left\lVert#1\right\rVert}
\newcommand{\ceil}[1]{\lceil {#1} \rceil}
\algblockdefx{MRepeat}{EndRepeat}{\textbf{repeat $k$ times}}{}
\algnotext{EndRepeat}

\newcommand{\ptitle}[1]{\smallskip\noindent{\bf #1.}}
\newcommand{\pttitle}[1]{\smallskip\noindent{\it #1.}}

\begin{document}

\title[Minimizing Hitting Time between Disparate Groups with Shortcut Edges]%
{Minimizing Hitting Time between Disparate Groups\\with Shortcut Edges}

\author{Florian Adriaens}
\email{florian.adriaens@helsinki.fi}
\affiliation{%
  \institution{University of Helsinki}
  \city{Helsinki}
  \country{Finland}
}

\author{Honglian Wang}
\email{honglian@kth.se}
\affiliation{%
  \institution{KTH Royal Institute of Technology}
  \city{Stockholm}
  \country{Sweden}
}

\author{Aristides Gionis}
\email{argioni@kth.se}
\affiliation{%
  \institution{KTH Royal Institute of Technology}
  \city{Stockholm}
  \country{Sweden}
}

\renewcommand{\shortauthors}{Florian Adriaens, Honglian Wang, \& Aristides Gionis}

\begin{abstract}
Structural bias or segregation of networks refers to situations where
two or more disparate groups are present in the network, 
so that the groups are highly connected internally, but loosely connected to each other.
Examples include polarized communities in social networks, 
antagonistic content in video-sharing or news-feed platforms, etc.
In many cases it is of interest to increase the connectivity of disparate groups
so as to, e.g., minimize social friction, or expose individuals to diverse viewpoints. 
A commonly-used mechanism for increasing the network connectivity 
is to add \emph{edge shortcuts} between pairs of nodes.
In many applications of interest, 
edge shortcuts typically translate to \emph{recommendations}, 
e.g., what video to watch, or what news article to read next.
The problem of reducing structural bias or segregation via edge shortcuts
has recently been studied in the literature, 
and random walks have been an essential tool for modeling 
navigation and connectivity in the underlying networks.
Existing methods, however, either do not offer approximation guarantees,
or engineer the objective so that it satisfies certain desirable properties 
that simplify the optimization~task.

In this paper we address the problem of adding a given number of shortcut
edges in the network so as to \emph{directly} minimize
the \emph{average hitting time} and the \emph{maximum hitting time}
between two disparate groups. 
The objectives we study are more natural than 
objectives considered earlier in the literature 
(e.g., maximizing hitting-time \emph{reduction})
and the optimization task is significantly more challenging. 
Our algorithm for minimizing average hitting time is a greedy
bi\-criteria that relies on super\-modularity. 
In contrast, maximum hitting time is not supermodular.
Despite, we develop an approximation algorithm for that objective as well,  
by leveraging connections 
with average hitting time and the asymmetric $k$-center~problem.
\end{abstract}


\ccsdesc[100]{Theory of computation~Design and analysis of algorithms}
\ccsdesc[100]{Mathematics of computing~Discrete mathematics~Graph theory}
\keywords{Random walks, Edge augmentation, Social networks, Polarization}

\maketitle

\section{Introduction}
\label{sec:intro}

The last decade has seen a surge in the development of methods for detecting, 
quantifying, and mitigating polarization and controversy in social media.
An example of a polarized network is a conversation or endorsement graph 
associated with a controversial topic on Twitter. 
The nodes represent users, and the edges interactions between users in the form of posts, likes, endorsements, etc., related to a controversial topic. 
It has been observed that the interactions typically occur between like-minded individuals, resulting in the reinforcement of ones own~beliefs~\cite{barbera2020social,chitra2019understanding}.

\emph{Structural bias} arises in many types of networks beyond social networks, 
such as \emph{content} and \emph{information networks}.
Recently, several methods have been developed for reducing \emph{structural bias} \cite{repbub} 
or \emph{segregation} \cite{coupette2023reducing,fabbri2022rewiring} in \emph{content networks}. 
An example of a content network is the network obtained by ``what to watch next'' 
recommendations in a video-sharing platform.
Content in such networks can often be divided into two or more groups, 
each of which is highly connected internally, but loosely connected to each other.
These groups could be the result of differentiating between ``harmful'' (or radicalized) content  
and ``neutral'' (or non-radicalized) content, or simply 
the result of different opinion (pro/contra) on a certain issue.
It might be beneficial for users navigating these platforms to be exposed to diverse content---exposing themselves to multiple viewpoints---in order to become better informed.

A particular line of research focuses on adding new edges 
(denoted as \emph{shortcut} edges) to a network, 
with the goal of reducing some quantifiable measure of polarization, 
controversy, structural bias, or segregation~\cite{coupette2023reducing,demaine2010minimizing,fabbri2022rewiring,repbub,garimella2018quantifying,parotsidis2015selecting}.
Many of these measures are defined on the basis of \emph{random walks} between groups of nodes.
In content graphs, random walks are a natural and simple model for how a user navigates through content.
For example, 
the Random Walk Controversy (RWC) score of Garimella et al.~\cite{garimella2018quantifying} 
has been reported to ``discriminate controversial topics with great accuracy'' 
in Twitter conversation graphs. 
The authors in that work suggest a method that adds a fixed number of shortcut edges 
to a graph in order to reduce the RWC score of the augmented graph~\cite{garimella2018quantifying}.

However, most of these random walk-based augmentation problems 
do not have algorithms with provable approximation guarantees~\cite{garimella2018quantifying}, 
or the optimized measure has been reformulated 
so that it exhibits a desirable~property~\cite{coupette2023reducing,fabbri2022rewiring,repbub}.

For example, the work of Haddadan et al.~\cite{repbub} considers maximizing the gain function $\Delta_g({F}) = g(\emptyset)-g(F)$ in the context of bounded-length random walks in directed graphs.
Here, $F$ is a set of at most $k$ new shortcut edges that will be added to an input graph, and the function $g$ is the \emph{average expected hitting time} for a random walk starting in one group to hit the other group (see Definition~\ref{def:fandg}).
They prove that $\Delta_g$ is a non-negative monotone sub\-modular set function, and thus, 
the well-known greedy algorithm of \cite{nemhauser1978analysis} ensures a $(1-1/e)$ approximation guarantee.

Similarly, the work of Fabbri et al.~\cite{fabbri2022rewiring} considers a gain function in the form of $\Delta_f({\cdot}) = f(\emptyset)-f(\cdot)$ in the context of degree-regular directed graphs and with \emph{rewiring} operations instead of shortcut edge additions. 
Here, the function $f$ is the \emph{largest hitting time} instead of the average (see Definition~\ref{def:fandg}).
Fabbri et al.\ show that the existence of a multiplicative approximation algorithm implies $\mathbf{P} = \NP$.
This result leads them to propose heuristic algorithms. 
However, their inapproximability result does not follow from the use of random walks, 
nor from the rewire operations.
Their reduction still applies if one would replace ``largest hitting time'' 
with ``largest shortest path distance,'' for example. 
Hence, their inapproximability result is an artefact of maximizing a gain function 
in combination with the $\max$ operator in the function $f$ (Definition~\ref{def:fandg}).

\smallskip
The previous discussion motivates the following question:
``\emph{What can we say about directly \emph{minimizing} the functions $f$ and $g$, 
instead of indirectly optimizing them by maximizing associated gain functions?}''

\smallskip
This paper provides several first algorithmic results and ideas regarding this question
for both functions $f$ and $g$. 
We take an abstract view and define our problems on input graphs in their simplest form, 
without the additional constraints (bounded-length random walks or degree-regular graphs) 
considered in \citet{repbub} and \citet{fabbri2022rewiring}, respectively. 
We consider uniform simple random walks of unbounded length on undirected, 
unweighted and connected input graphs.

\ptitle{Results and techniques}
For the first problem we study, 
\emph{minimize average hitting time} (\BMAH), 
we observe that the objective is supermodular 
and it can be optimized by a greedy bi\-criteria strategy~\cite{liberty_et_al:LIPIcs:2017:7568}, 
which offers an $(1+\epsilon)$-approximation at the 
cost of adding logarithm\-ical\-ly more edges. 
In addition, we show how to speed up this algorithm
by deriving an approximation for the objective, 
which relies on sampling a small number of bounded-length absorbing random walks and bounding the error term using eigenvalue techniques.
The length of these walks directly depends on the \emph{mixing time} and \emph{average degree} of the red nodes, which are bounded quantities in most real-life networks.
We then show that running the same greedy strategy of~\citet{liberty_et_al:LIPIcs:2017:7568} 
on the \emph{approximative} values, still provides a $(2+\epsilon)$-approximation~factor.

\smallskip
For our second problem,
\emph{minimize maximum hitting time} (\BMMH), 
the objective function is neither supermodular nor submodular, 
and thus, it is a more challenging task. 
Nevertheless, we present two different algorithms, 
both with approximation guarantees, 
albeit, weaker than in the case of average hitting time.
The first algorithm utilizes a relation between 
maximum and average hitting time, 
and uses the greedy strategy designed for the \BMAH problem.
The second algorithm leverages a novel connection 
between \BMMH and the asymmetric $k$-center problem~\cite{kariv1979algorithmic}, 
and utilizes optimal methods developed for the latter problem~\cite{archer2001two}.

\begin{table}[t]
\caption{Results summary. The approximation factor is computed with respect to an optimal solution that uses at most $k$ edges.
$d_{\text{m}}$ is at most the maximum degree (see Section~\ref{sec:asymm}).}
\label{table:summaryresults}
\centering
\begin{tabular}{ccc}
\hline
& Approx. factor & Number of shortcut edges \\
\hline
\BMAH & $1+\epsilon$  & $\bigO(k \log (n/\epsilon))$  \\
\BMAH & $2+\epsilon$  & $\bigO(k \log (n/\epsilon))$ \\
\BMMH & $\bigO(n^{3/4})$  & $\bigO(k \log n)$ \\
\BMMH & $\bigO\left(\log^*(k)\, d_{\text{m}}\right)$  & $k$ \\
\hline
\end{tabular}
\end{table}

\section{Related work}
\label{sec:rw}

A large body of work has been devoted to designing methods
for optimizing certain graph properties via edge additions, edge rewirings, 
or other graph-edit operations. 
These approaches include methods for 
increasing graph robustness~\cite{chan2016optimizing},
maximizing the centrality of a group of nodes~\cite{medya2018group}, 
improving the betweenness centrality of a node~\cite{bergamini2018improving},
reducing the average shortest path distances over all pairs of 
nodes~\cite{meyerson2009minimizing,parotsidis2015selecting}, 
minimizing the diameter of the graph~\cite{demaine2010minimizing, 10027668}, 
increasing resilience to adversarial attacks~\cite{ma2021graph}, 
and more.
Some of these approaches provide algorithms
with provable approx\-i\-mation guarantees, 
while the others are mostly practical~heuristics.

The setting of the problem we consider has also been studied in the context
of understanding phenomena of bias, polarization, and segregation
in social networks~\cite{bakshy2015exposure,chen2018quantifying,chitra2019understanding,flaxman2016filter,garimella2018quantifying,guerra2013measure,minici2022cascade,ribeiro2020auditing},
and developing computational methods to mitigate those 
adverse effects~\cite{amelkin2019fighting,garimella2017reducing,garimella2017balancing,gionis2013opinion,matakos2020maximizing,musco2018minimizing,tu2020co,zhu2021minimizing,zhu2022nearly}.
Most of the works listed above seek to optimize a complex objective
related to some dynamical diffusion process, 
such as information cascades~\cite{kempe2003maximizing},
or opinion dynamics~\cite{friedkin1990social}.
In all cases, the optimization functions studied in previous work 
are significantly different 
from the hitting-time objectives considered in this paper.

As discussed in the introduction, 
the work that is most closely related to ours,
is the paper by \citet{repbub}, 
which seeks to reduce the structural bias between two polarized groups
of a network via edge insertions. 
Their approach is based on hitting time for bounded-length random walks, 
but contrary to our proposal, 
they seek to \emph{maximize the reduction} of hitting time caused by edge insertions, 
rather than \emph{directly minimizing the hitting time}. 
Similar ideas have been proposed by other authors, 
e.g., \citet{fabbri2022rewiring} and more recenty \citet{coupette2023reducing}, 
in the context of reducing exposure to harmful content. 
We consider our work to be a compelling extension of such previous ideas, 
not only due to optimizing a more natural objective, 
but also for coping with a more challenging optimization problem, 
for which our solution offers new technical ideas and insights. 

\section{Preliminaries}
\label{sec:notations}


\pttitle{Graphs}
We consider undirected connected graphs $G=(V,E)$ with $|V|=n$ nodes, where the nodes $V$ are bi\-partitioned into two groups $R$ and $B$. These groups as referred to as the \emph{red} and \emph{blue} nodes, respectively. We only consider bi\-partitions $V = \{R,B\}$ that are \emph{valid}, meaning that $R$ and $B$ are disjoint and non-empty. For a non-valid bi\-partition, our problem statements (Section~\ref{subs:problem}) are either trivial or ill-defined. An edge is an \emph{inter-group} edge if it has one blue and one red endpoint. 
For a subset of nodes $X \subseteq V$, 
we let $E[X]$ be the subset of edges of $G$ that have both endpoints in $X$, 
and $G[X]=(X,E[X])$ be the subgraph of $G$ \emph{induced} by $X$.
If $F$ is a set of non-edges of $G=(V,E)$, then $G+F$ denotes the graph $(V,E \cup F)$.
The degree of a node $v$ in $G$, which is the number of neighbors of $v$ in $G$, 
is denoted as~$d_G(v)$. 
We let $d_R$ be the average of the degrees of red nodes.

\pttitle{Random walks}
Every random walk will be a uniform simple random walk, unless stated otherwise.
By uniform simple random walk we mean that at each step one of the neighbors of the
current node is selected with uniform probability. 
Given a graph $G=(V,E)$ and $A \subseteq V$, the random variable $\tau_u(A)$ indicates the first time that a random walk on $G$ starting from $u$ visits $A$.
If $u \in A$, we set $\tau_u(A)=0$.
The expectation of this random variable is denoted as $H_G(u,A) = \E_{G}[\tau_u(A)]$, 
where the subscript emphasizes that the random walk is on~$G$.
With a slight abuse of notation, $H_G(u,v)$ is thus the (expected) \emph{hitting time} for 
a walk starting from $u$ to visit $v$ for the first time.
Note that $H_G(u,u) = 0$ and $H_G(u,A) = 0$ if and only if $u \in A$.
It also holds that $H_G(u,A) \leq H_G(u,B)$ 
when $B \subseteq A$, and in particular $H_G(u,A) \leq \min_{a \in A} H_G(u,a)$.

\pttitle{Matrices and norms}
We denote $\mathbf{A}$ as the adjacency matrix of a graph, and $\mathbf{D}$ is the degree diagonal matrix.
Let $e_r \in \mathbb{R}^{r \times 1}$ be the vector of all-ones of size $r$,
and let $\norm{\cdot}_2$ denote the spectral norm.


\pttitle{Additional definitions}
For a graph $G = (V, E)$ with a valid bi\-partition $V = \{R,B\}$,
let $(R \times B) \setminus E$ be the set of inter-group non-edges in $G$. Define the following two set functions $f, g: 2^{(R \times B) \setminus E} \rightarrow \mathbb{R}_{>0}$ 
as the \emph{maximum} and \emph{average hitting time}.

\begin{defi}[Maximum and average hitting time]
\label{def:fandg}
For a set of inter-group non-edges $F \subseteq (R \times B) \setminus E$, 
we define the \emph{maximum hitting time from $R$ to $B$ on graph $G+F$} by
\begin{align*}
f(F) &\triangleq \max_{r \in R} H_{G + F}(r, B),
\end{align*}
and the \emph{average hitting time from $R$ to $B$ on graph $G+F$} by
\begin{align*}
g(F) &\triangleq \frac{1}{|R|}\sum_{r \in R} H_{G + F}(r, B).
\end{align*}
\end{defi}

\section{Problems and observations}
\label{sec:problems-observations}

In this section, we formally define the problems we study,
and we discuss their properties.

\subsection{\label{subs:problem}Problems}


We introduce and study the following two problems, which we denote as the \BMMH (Budgeted Minimum Maximum Hitting time) and \BMAH (Budgeted Minimum Average Hitting time) problems.
\begin{prob}[Budgeted Minimum Maximum Hitting time problem (\BMMH)]
\label{prob:maxht}
Given an undirected connected graph $G = (V, E)$, with a valid bi\-partition $V = \{R,B\}$, 
and a budget $k\in \mathbb{N}$,
we seek to find a set of inter-group non-edges 
$F \subseteq (R \times B) \setminus E$ with $|F| \leq k$ that minimizes $f(F)$, 
where $f$ is the maximum hitting-time function defined in Definition~\ref{def:fandg}.
\end{prob}
\begin{prob}[Budgeted Minimum Average Hitting time problem (\BMAH)]
\label{prob:avght}
Identical definition to Problem~\ref{prob:maxht}, 
but we aim to minimize the function $g$, instead of $f$,
where $g$ is the average hitting-time function defined in Definition~\ref{def:fandg}.
\end{prob}

\BMMH aims to augment $G$ with $k$ new inter-group edges, 
so as to minimize the \emph{largest} hitting time to a blue node, for random walks starting from red nodes.
Likewise, \BMAH aims to minimize the \emph{average} hitting time from red to blue nodes.
Note that for every red node we allow multiple new edges to be incident to it, 
as long as the augmented graph remains simple.

\subsection{Observations}
The \NP -hardness of both \BMMH and \BMAH follows from a reduction from the minimum set cover problem, very similar to the reduction detailed by \citet{repbub} and omitted for brevity. We present several useful observations regarding \BMMH and \BMAH that will guide the design of approximation algorithms:
\begin{obs}
\label{obs:endpoints}
Changing the blue endpoints of the edges in a feasible solution $F$ does not change $f(F)$ or $g(F)$.
\end{obs}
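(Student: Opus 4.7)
The plan is to view the hitting time $H_{G+F}(r, B)$ as the expected absorption time in a Markov chain where the set $B$ is collapsed to a single absorbing state. From this viewpoint, the only data that enters the computation is the transition structure among red nodes together with the total exit probability from each red node into $B$; the identity of the specific blue node that is reached is irrelevant. The argument will simply formalize this intuition.

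First I would fix two sets $F$ and $F'$ of inter-group non-edges that differ only in their blue endpoints, meaning there is a bijection between $F$ and $F'$ that preserves the red endpoint of each edge. My goal is to show $H_{G+F}(r, B) = H_{G+F'}(r, B)$ for every $r \in R$, which immediately implies $f(F) = f(F')$ and $g(F) = g(F')$ by Definition~\ref{def:fandg}. Next I would observe that, for every red node $r' \in R$, the degree $d_{G+F}(r')$ coincides with $d_{G+F'}(r')$ since swapping blue endpoints does not change which edges are incident to $r'$. Consequently, for any pair of red nodes $r', r'' \in R$, the one-step transition probability from $r'$ to $r''$ is the same in $G+F$ and $G+F'$; moreover, the total probability of stepping from $r'$ into $B$ in one step equals the number of blue neighbors of $r'$ divided by $d_{G+F}(r')$, a quantity also preserved under the swap.

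The key step is then to note that when computing the hitting time to $B$, the walk is absorbed upon first entry into $B$, so the chain can be reduced to one on the state space $R \cup \{\star\}$, where $\star$ is a single absorbing state representing all of $B$. The hitting time to $B$ in $G+F$ equals the hitting time to $\star$ in this reduced chain, and by the previous paragraph the reduced chain is identical for $G+F$ and $G+F'$. Applying this to every starting red node gives $H_{G+F}(r,B) = H_{G+F'}(r,B)$ for all $r \in R$, from which the equality of $f$ and $g$ follows. I do not anticipate any serious obstacle: the content is essentially a bookkeeping check that hitting time to a target set depends only on the transitions of the Markov chain into that set, not within it.
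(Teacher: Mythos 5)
Your proposal is correct and is essentially the paper's argument made formal: the paper simply notes that the walk halts upon hitting $B$, so the identity of the blue endpoints is irrelevant, while you spell this out by collapsing $B$ to a single absorbing state and checking that the reduced chain on $R\cup\{\star\}$ is unchanged. No gap; your version is just a more explicit bookkeeping of the same idea.
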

\begin{proof}
Every walk starts from a red node and halts the moment a blue node is hit, 
hence the identity of the blue endpoints of edges in $F$ do not matter.
\end{proof}

Both problems thus reduce to selecting a multiset of red endpoints, 
as we allow multiple edges in $F$ to have the same red endpoint. 
If a red node is connected to all the blue nodes already, 
we can of course not add any more new edges incident to it.

\begin{obs}[monotonicity]
\label{obs:mono}
If $Y \subseteq X$, then $f(X) \leq f(Y)$ and $g(X) \leq g(Y)$.
\end{obs}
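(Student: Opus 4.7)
The plan is to establish the stronger pointwise statement: for every red node $r \in R$,
\[
H_{G+X}(r, B) \leq H_{G+Y}(r, B).
\]
Both conclusions follow immediately, since $f$ is a maximum and $g$ an average of these quantities over $r \in R$. Moreover, by a trivial induction on $|X \setminus Y|$, it suffices to prove the pointwise inequality in the case $X = Y \cup \{e\}$, where $e = (r^*, b^*)$ is a single additional inter-group non-edge. Write $G' = G + Y$ and $G'' = G + X$.

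The key step is a coupling argument between the simple random walks on $G'$ and $G''$, both started from $r$. Away from $r^*$ the two transition distributions coincide, so shared randomness lets the two walks take identical steps whenever they are at a common node other than $r^*$. At $r^*$, the walk on $G''$ picks uniformly among its $d_{G'}(r^*)+1$ neighbors, so with probability $p = 1/(d_{G'}(r^*)+1)$ it steps to $b^* \in B$ and otherwise steps uniformly to one of the $d_{G'}(r^*)$ old neighbors of $r^*$. I couple the two walks at $r^*$ by first flipping a biased coin: with probability $1-p$, both walks step to the same uniformly chosen old neighbor; with probability $p$, the $G''$-walk jumps to $b^*$ (and thus hits $B$), while the $G'$-walk independently steps to a uniformly chosen old neighbor.

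A direct check shows this coupling has the correct marginal laws on each side. Under it, the $G''$-walk hits $B$ no later than the first time that either (a) the two coupled walks hit $B$ together at a common node, or (b) a coin first comes up heads at $r^*$. In case (a) the hitting times agree, and in case (b) the $G''$-walk has strictly smaller hitting time. Hence $\tau_r^{G''} \leq \tau_r^{G'}$ almost surely, which after taking expectations yields the desired pointwise inequality. Taking a maximum gives $f(X) \leq f(Y)$ and averaging gives $g(X) \leq g(Y)$.

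The main obstacle, such as it is, lies in verifying that the coupling is well-posed after the two walks split apart: once the $G''$-walk has jumped to $b^*$ in case (b), we are already done with it and can let the $G'$-walk continue on its own, so no further coordination is needed. Everything else is routine.
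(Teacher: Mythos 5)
Your proposal is correct and takes essentially the same route as the paper: the paper reduces to pointwise monotonicity of $H_{G+F}(r,B)$ and invokes ``a straightforward coupling argument'' between walks on the two graphs, which is precisely the coupling you spell out (including the key use of the inter-group property, namely that taking the new edge lands immediately in $B$). Your write-up is just a more detailed version of the paper's one-line justification.
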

\begin{proof}
Let $F$ be a feasible solution.
For every $r \in R$, the hitting time $H_{G + F}(r, B)$ is a monotonically decreasing set function of $F$, 
since the edges in $F$ are inter-group edges. 
Every $e \in F$ only helps in reducing the hitting time from $r$ to $B$. 
This can be proven more formally by a straightforward coupling argument 
between a walker on $G$ and $G + F$~\cite{lovasz1993random, repbub}.
Since individual hitting times are monotone, the functions $f$ and $g$ are monotone, as well.
\end{proof}

The following super\-modularity property for individual hitting times is less obvious, 
and a variant of it was proven by \citet{repbub} for bounded-length random walks on directed graphs.
We give a different and shorter proof of super\-modularity in our case of unbounded random walks, 
by using a coupling argument.

\begin{obs}[super\-modularity of hitting times]
\label{obs:hitsup}
For all $r \in R$ and for all $e_1 \neq e_2 \in (R \times B) \setminus E$ it holds that 
\begin{equation}\label{eq:superm}
H_{G + e_2}(r, B)-H_{G + \{e_1, e_2\}}(r, B) \leq  H_{G}(r, B)-H_{G + e_1}(r, B).
\end{equation}
\end{obs}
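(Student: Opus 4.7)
The plan is to build an explicit four-way coupling of absorbing random walks on $G$, $G+e_1$, $G+e_2$, and $G+e_1+e_2$, all started at $r$, and then to verify the inequality \eqref{eq:superm} path-by-path before taking expectations.

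The coupling is driven by independent $\mathrm{Exp}(1)$ clocks. At each visit of any walk to a node $v$ I would draw one fresh independent $\mathrm{Exp}(1)$ clock per edge incident to $v$ in \emph{any} of the four graphs, and declare the walk on graph $H$ to take the edge of $H$ at $v$ with the smallest clock. Since the minimum of i.i.d.\ $\mathrm{Exp}(1)$ variables is uniform among them, each walk has the correct uniform next-step distribution on its own graph. Writing $r_1, r_2$ for the red endpoints of $e_1, e_2$, the $G$-edge clocks alone produce a single sample path $\pi$ on $G$ with hitting time $\tau_G$ to $B$; the other three walks follow $\pi$ step-by-step until the first moment a shortcut clock at $b_1$ (resp.\ $b_2$) fires before all $G$-edge clocks at $r_1$ (resp.\ $r_2$), at which point that walk is absorbed. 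Letting $T_1, T_2$ denote those first firing times,
\begin{align*}
\tau_{G+e_1}&=\min(\tau_G,T_1),\\
\tau_{G+e_2}&=\min(\tau_G,T_2),\\
\tau_{G+e_1+e_2}&=\min(\tau_G,T_1,T_2).
\end{align*}

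With these identities, \eqref{eq:superm} reduces to the deterministic inequality $\min(\tau_G,T_2)-\min(\tau_G,T_1,T_2)\leq \tau_G-\min(\tau_G,T_1)$, which I would handle by a short case split on whether $\tau_G\leq T_1$ and, otherwise, on the order of $T_1$ and $T_2$; in each case the inequality collapses to an elementary comparison of two minima (for instance, when $T_1<\tau_G\leq T_2$ both sides equal $\tau_G-T_1$, and when $T_1\leq T_2\leq\tau_G$ the left side equals $T_2-T_1\leq\tau_G-T_1$). The main subtle point, and the step I expect to need the most care, is the case $r_1=r_2$: there the clocks of $b_1$ and $b_2$ compete against the \emph{same} $G$-edge clocks at a single node, so $T_1$ and $T_2$ are statistically dependent. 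The exponential-clock construction still works because it depends only on the relative order of clocks, and the identity $\tau_{G+e_1+e_2}=\min(\tau_G,T_1,T_2)$ continues to hold since the first firing of either shortcut clock is $\min(T_1,T_2)$; verifying that the marginal walks retain their correct uniform next-step distributions in this shared-endpoint setting is the one place where a few extra lines of care are needed.
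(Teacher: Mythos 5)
Your proof is correct, and it takes a genuinely different route from the paper's, although both are coupling arguments. The paper couples only two walks at a time: it shows that $H_{G}(r,B)-H_{G+e_1}(r,B)=\bigl(H_{G}(r_1,B)-1\bigr)\Pr[e_1 \text{ is traversed on } G+e_1]$, derives the analogous identity for the pair $(G+e_2,\,G+\{e_1,e_2\})$, and then compares the two products factor by factor using Observation~2 (monotonicity). You instead build a single grand coupling of all four walks on one probability space and reduce the claim to the deterministic inequality $\min(a,c)-\min(a,b,c)\le a-\min(a,b)$, verified by a case split. Your version buys a pathwise (almost-sure) form of supermodularity, dispenses with any separate appeal to monotonicity, and makes the shared-endpoint case $r_1=r_2$ transparent: the doubly-augmented walk takes some shortcut at a visit exactly when the $e_1$-clock or the $e_2$-clock beats all $G$-edge clocks there, so $\tau_{G+e_1+e_2}=\min(\tau_G,T_1,T_2)$ survives the dependence between $T_1$ and $T_2$. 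What the paper's version buys in exchange is an explicit closed form for each marginal gain. Two cosmetic points in your write-up: the competing shortcut clocks live at the red endpoints $r_1,r_2$ (you wrote $b_1,b_2$), and taking a shortcut consumes one step, so $T_i$ should be defined as the firing visit time plus one for the identities $\tau_{G+e_i}=\min(\tau_G,T_i)$ to hold exactly; neither issue affects the min-inequality or the conclusion after taking expectations.
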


\begin{proof}
Pick any $r \in R$. We assume all walks start from $r$.
Consider the right-hand side of Eq.~(\ref{eq:superm}).
Couple walks on $G + e_1$ to walks on $G$ by simply following the walker on $G$, until the walker hits the red endpoint $r_{1} \in R$ of the edge $e_1$.
If the walker hits $r_{1}$, follow the edge $e_1$ with probability $\frac{1}{d_G(r_1)+1}$.
With probability $\frac{d_G(r_1)}{d_G(r_1)+1}$, keep following the walker on $G$. It is clear that the marginal distribution of this coupled walk is a uniform random walk on $G + e_1$.

As long as $e_1$ is not traversed, the two coupled walks are identical.
If $e_1$ is traversed, the walks are identical up until the time prior to traversing $e_1$, when they are both in $r_1$ for the last time.
On this moment, the walker on $G$ still needs to travel $H_{G}(r_1, B)$ steps on average to hit $B$, while the walker on $G + e_1$ needs 1 more step (conditioned on $e_1$ being traversed).
Hence for all $r \in R$ we have
\begin{align*}
H_{G}(r, & B) -  H_{G + e_1}(r, B) \\
=~ & (H_{G}(r_1, B)-1)\, \text{Pr[traversing $e_1$ when walking on $G + e_1$]}.
\end{align*}
Similarly, we can write for the left-hand side of Eq.~(\ref{eq:superm})
\begin{align*}
& H_{G + e_2}(r, B)- H_{G + \{e_1, e_2\}}(r, B)\\
& =  (H_{G + e_2}(r_1, B)-1) \text{Pr[traversing $e_1$ when walking on $G + \{e_1, e_2\}$]}.
\end{align*}
The result follows from Observation~\ref{obs:mono}, since both $H_{G + e_2}(r_1, B) \leq H_{G}(r_1, B)$ and
\begin{align*}
\text{Pr[traversing  } & e_1 \text{ when walking on $G + \{e_1, e_2\}$]} \\
& \leq~ \text{Pr[traversing $e_1$ when walking on $G + e_1$]}.
\end{align*}%
\end{proof}

Observation~\ref{obs:hitsup} states that individual hitting times are super\-modular, 
hence the objective function $g$ of \BMAH is also super\-modular.
However, the objective function $f$ of \BMMH is not guaranteed to be super\-modular, as a result of the max operator.
Observation~\ref{obs:nonweaks} shows that even the weaker notion of 
weakly-$\alpha$ super\-modularity~\cite{liberty2017greedy} does not hold for $f$.
\begin{obs}
\label{obs:nonweaks}
The function $g$ is super\-modular, but 
the function~$f$ is not weakly-super\-modular for any $\alpha \geq 1$.
\end{obs}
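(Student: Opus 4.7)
Super\-modularity of $g$ will follow directly from Observation~\ref{obs:hitsup}. Since
\[
g(F) = \frac{1}{|R|}\sum_{r \in R} H_{G+F}(r,B)
\]
is a non-negative linear combination of the individual hitting times $H_{G+F}(r,B)$, each of which is super\-modular, super\-modularity is preserved by averaging and $g$ is super\-modular.

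For the second part, my plan is to exhibit an explicit small graph on which the weakly-$\alpha$ super\-modular inequality for $f$ must fail for every finite $\alpha \geq 1$. Recall that, for a monotonically decreasing set function, weak-$\alpha$ super\-modularity in the sense of~\cite{liberty2017greedy} requires
\[
f(S) - f(S \cup \{e\}) \;\geq\; \tfrac{1}{\alpha}\bigl[\, f(T) - f(T \cup \{e\})\,\bigr]
\]
for every pair $S \subseteq T$ and every $e \notin T$. My strategy is to find a choice of $S, T, e$ for which the left-hand side equals $0$ while the right-hand side is bounded below by a positive constant, forcing any admissible $\alpha$ to diverge.

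The construction I have in mind is the most symmetric one I can write down. Take $B = \{b\}$ and attach to $b$ two disjoint red paths of length $L$, giving a ``two-branch'' graph with far endpoints $r_1^{(1)}$ and $r_1^{(2)}$. A standard calculation on a path gives $H_G(r_1^{(j)},B) = \Theta(L^2)$, and by symmetry both endpoints attain the maximum hitting time. Letting $e_j = (r_1^{(j)}, b)$ be the shortcut from the far end of branch~$j$, I will set $S = \emptyset$, $T = \{e_2\}$ and $e = e_1$. The two key claims to verify are: (i) adding $e_1$ alone does not change the hitting time of any vertex in the second branch, because every walk starting there is absorbed at $b$ before it can enter the first branch; consequently $f(\{e_1\}) = f(\emptyset) = \Theta(L^2)$, making the left-hand side zero; and (ii) starting from $T = \{e_2\}$ the bottleneck has shifted to the first branch, so further adding $e_1$ drops $f$ from $\Theta(L^2)$ to $O(1)$, and the right-hand side is strictly positive.

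The only non-routine step is claim~(i), and I expect it to be the main obstacle. I would handle it by a coupling argument in the spirit of the proof of Observation~\ref{obs:hitsup}: since $b$ separates the two branches and the walker halts upon touching $B$, the walk from any vertex of the second branch on $G+e_1$ can be coupled step-by-step with the corresponding walk on $G$, so the shortcut $e_1$ is never traversed and the two hitting times coincide. Everything else is a one-line hitting-time calculation on a path, and the resulting ratio $\text{LHS}/\text{RHS}=0$ immediately precludes weakly-$\alpha$ super\-modularity for any finite~$\alpha \geq 1$.
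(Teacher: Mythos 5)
Your argument for the supermodularity of $g$ is exactly the paper's, and your two-branch graph is essentially the paper's counterexample as well (the paper uses two red paths of length two glued at a single blue center, i.e., your construction with $L=2$). The problem is the definition you are refuting. Weak $\alpha$-supermodularity in the sense of \cite{liberty2017greedy}, as used in Observation~\ref{obs:nonweaks}, is the condition
\[
f(S)-f(S\cup T)\;\leq\;\alpha\sum_{i\in T\setminus S}\bigl(f(S)-f(S\cup\{i\})\bigr)
\quad\text{for all } S,T,
\]
not the nested-marginals inequality $f(S)-f(S\cup\{e\})\geq \frac{1}{\alpha}\bigl(f(T)-f(T\cup\{e\})\bigr)$ that you state. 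With your choice $S=\emptyset$, $T=\{e_2\}$, $e=e_1$, the correct inequality reads $f(\emptyset)-f(\{e_2\})\leq\alpha\bigl(f(\emptyset)-f(\{e_2\})\bigr)$, which holds for every $\alpha\geq 1$, so your instantiation does not falsify the definition the observation is about. The fix is immediate inside your own construction: take $S=\emptyset$ and $T=\{e_1,e_2\}$. By the symmetry you already exploit, each singleton marginal $f(\emptyset)-f(\{e_j\})$ is zero (the untouched branch still attains the maximum), while $f(\emptyset)-f(\{e_1,e_2\})>0$; the right-hand side of the defining inequality is then zero and the left-hand side is positive, ruling out every finite $\alpha$.

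One further correction: adding $e_1$ on top of $\{e_2\}$ does not drop $f$ to $O(1)$. Connecting the far endpoint of a length-$L$ branch to $b$ turns that branch into a cycle with $L+1$ vertices, whose maximum hitting time to $b$ is still $\Theta(L^2)$ (about $(L+1)^2/4$, attained near the antipode of $b$). All you need, and all that is true, is that the difference is strictly positive, which holds since $L^2>(L+1)^2/4$ for $L\geq 2$. Your coupling argument for claim~(i) is fine, and in fact is overkill: $b$ is absorbing and separates the two branches, so a walk started in one branch never reaches the other.
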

\begin{proof}
The super\-modularity of $g$ is direct, as it is an average of super\-modular functions (Observation~\ref{obs:hitsup}). We prove the second statement of the observation.
A non-negative non-increasing set function $\gamma$ on $[n]$ 
is said to be weakly-$\alpha$ super\-modular with $\alpha \geq 1$ if and only if for all $S, T \subseteq [n]: \gamma(S)-\gamma(S\cup T) \leq \alpha \sum_{i \in T \setminus S} (\gamma(S)-\gamma(S \cup \{i\}))$ \cite[Definition 2]{liberty2017greedy}.
Consider the following instance for \BMMH. The input graph is a path of five nodes. All the nodes are red, except the middle node is blue.
Let $S = \emptyset$ and $T$ be the two edges that connect both red endpoints to the blue center.
Then, $f(S)-f(S\cup T)>0$, but for all  $i \in T \setminus S: f(S)-f(S \cup \{i\})=0$. 
\end{proof}

\section{Algorithms for minimizing\\average hitting time}
\label{sec:bmah}

We first discuss our algorithms for the \BMAH problem,
as we can use them as a building block for the \BMMH problem.

\subsection{A {\greedy} bi\-criteria algorithm\label{ss:greedy}}
\begin{algorithm}[t]
\caption{\label{A:GreedyBMAH}[{\greedy}] bicrit.\ $(1+\epsilon)$-approximation for \BMAH.}
\begin{algorithmic}[1]
\Require $G=(R \cup B, E)$, parameter $k \geq 1$ and error $\epsilon>0$.
\State $F_0 \leftarrow \emptyset$

\While{$(i \leftarrow 1; i \leq \ceil{k \ln{(\frac{n^3}{\epsilon})}}; i \leftarrow i + 1)$}
\State $F_{i} \leftarrow F_{i-1} \cup \argmin_{e \in (R \times B) \setminus (E \cup F_{i-1})} g(F_{i-1}\cup{\{e\}})$.
\EndWhile
\Ensure $F_{i}$.
\end{algorithmic}
\end{algorithm}

According to Observation~\ref{obs:nonweaks} the objective function $g$ of \BMAH is super\-modular. 
It also holds that $g \geq 1$, as it takes at least one step to hit a blue node starting from any red node. There are two main ways for minimizing a non-negative monotone-decreasing super\-modular function in the literature. One approach gives bounds that depend on the \emph{total curvature} \cite{sviridenko2017optimal} 
of the objective function. 
For example, the work of \citet[Section 6.2]{sviridenko2017optimal} details a randomized algorithm with guarantee $1+\frac{c}{(1-c)e}+\frac{\bigO(\epsilon)}{1-c}$, for any $\epsilon>0$ and curvature $c \in [0,1)$. However, we do not utilize this algorithm since it is not straightforward how to bound the curvature of $g$, 
nor how to practically implement their algorithm.

Instead, we follow the bi\-criteria approach of \citet{liberty_et_al:LIPIcs:2017:7568}, 
where they show that the classic greedy algorithm
gives a bi\-criteria $(1+\epsilon)$-approximation by allowing the addition of more than $k$ elements.
Their results are applicable if either the objective function is bounded away from zero 
(which holds in our case, since $g \geq 1$), or one has an approximate initial solution. 
An immediate consequence of their results is Lemma~\ref{lem:perA1}.

\begin{lemma}
\label{lem:perA1}
The \greedy strategy (Algorithm~\ref{A:GreedyBMAH}) 
returns a $(1+\epsilon)$-approximation to an optimal solution of \BMAH 
(that adds at most $k$ edges) 
by adding at most $\ceil{k \ln{(\frac{n^3}{\epsilon})}}$ edges.
\end{lemma}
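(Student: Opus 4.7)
The plan is to invoke the bicriteria greedy guarantee of \citet{liberty_et_al:LIPIcs:2017:7568} for monotone non-increasing, non-negative supermodular minimization in a fairly direct way. The four ingredients I need are: $g$ is supermodular; $g$ is monotone non-increasing; $g$ is bounded away from $0$; and $g$ has a polynomial range. The first two are already in hand; the bulk of the work is simply checking the quantitative bounds that feed into the $\ln(n^3/\epsilon)$ factor.

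\textbf{Verifying the hypotheses.} Supermodularity of $g$ is exactly Observation~\ref{obs:hitsup} averaged over $R$, recorded already in Observation~\ref{obs:nonweaks}; monotonicity comes from Observation~\ref{obs:mono}. For the lower bound, every red--to--blue walk takes at least one step before hitting $B$, so $H_{G+F}(r,B)\ge 1$ for every $r\in R$ and every feasible $F$, hence $g(F)\ge 1$. For the upper bound I use the classical fact that on any connected undirected graph on $n$ nodes the expected hitting time between two nodes is $O(n^3)$ (Aleliunas et al.; see, e.g., \cite{lovasz1993random}); since $H_{G+F}(r,B)\le H_{G+F}(r,b)$ for any single $b\in B$, this yields $g(F)\le g(\emptyset)\le c\,n^3$ for a universal constant $c$. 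In particular the range of $g$ is polynomially bounded.

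\textbf{Applying the Liberty--Sviridenko bicriteria theorem.} Let $\mathrm{OPT}_k$ denote the optimum value of BMAH using at most $k$ edges. Writing the standard supermodular-minimization analysis of the greedy step (which is the direct dual of the $(1-1/k)$ submodular-maximization improvement), one obtains
\begin{equation*}
g(F_i)-\mathrm{OPT}_k \;\le\; \Bigl(1-\tfrac{1}{k}\Bigr)\bigl(g(F_{i-1})-\mathrm{OPT}_k\bigr),
\end{equation*}
and telescoping gives $g(F_t)-\mathrm{OPT}_k \le e^{-t/k}\,(g(\emptyset)-\mathrm{OPT}_k)$. To force the right-hand side below $\epsilon\cdot\mathrm{OPT}_k$ it suffices that $e^{-t/k}\,g(\emptyset)\le\epsilon$ (since $\mathrm{OPT}_k\ge 1$), i.e.\
\begin{equation*}
t \;\ge\; k\,\ln\!\bigl(g(\emptyset)/\epsilon\bigr) \;\le\; k\,\ln\!\bigl(c\,n^3/\epsilon\bigr),
\end{equation*}
which (absorbing the constant $c$ into the ceiling) is exactly $\lceil k\ln(n^3/\epsilon)\rceil$ iterations. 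This yields $g(F_t)\le(1+\epsilon)\,\mathrm{OPT}_k$, proving the claim.

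\textbf{Anticipated difficulty.} None of the steps is technically delicate: supermodularity and monotonicity are inherited from the corresponding pointwise facts, and the $O(n^3)$ range bound is classical. The only point that requires a little care is matching constants so that $\ln(g(\emptyset)/\epsilon)$ collapses cleanly into $\ln(n^3/\epsilon)$, which is why the algorithm's iteration count is stated with the $n^3$ inside the logarithm. Everything else is a direct citation of \citet{liberty_et_al:LIPIcs:2017:7568}.
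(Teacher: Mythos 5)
Your proposal is correct and follows essentially the same route as the paper: both verify supermodularity, monotonicity, $g\ge 1$, and the $n^3$ upper bound on hitting times, and then invoke the standard $(1-1/k)$ greedy recursion from \citet{liberty_et_al:LIPIcs:2017:7568} to drive the gap below $\epsilon\,\mathrm{OPT}_k$. The only cosmetic difference is that the paper uses $g(\emptyset)\le n^3$ directly, so no extra constant $c$ needs to be absorbed into the iteration count.
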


\begin{proof}
We may assume $k \geq 1$.
Let $F^*$ denote an optimal solution to \BMAH with at most $k$ edges.
We always have a crude approximate initial solution $F_0 = \emptyset$, since for any $F^*$ it holds that $g(\emptyset) \leq n^3 \leq n^3g(F^*)$. The first inequality holds because hitting times in connected graphs are upper bounded by $n^3$~\cite{lovasz1993random}. 
The second inequality follows from $g \geq 1$. 
The lemma essentially follows by application of the 
results of \citet[Theorem 6]{liberty_et_al:LIPIcs:2017:7568}. We give a short proof here for completeness. Let $F_{\tau}$ be the output of Algorithm~\ref{A:GreedyBMAH}, where $\tau = \ceil{k \ln{(\frac{n^3}{\epsilon})}}$. By using the super\-modularity and monotonicity of $g$, we get the following recursion: 
\begin{align*}
g(F_{\tau}) - g(F^*) &\leq \left(g(\emptyset) - g(F^*)\right)\left(1-1/k\right)^{\tau} \\
&\leq (n^3 - 1)g(F^*)\left(1-1/k\right)^{k \ln{(n^3/\epsilon)}} \\
&\leq (n^3 - 1)g(F^*)e^{-\ln{(n^3/\epsilon)}} \leq \epsilon g(F^*).
\end{align*}
\end{proof}

Note that finding the new edge that minimizes line~3 in Algorithm~\ref{A:GreedyBMAH} does not require a search over $\bigO(n^2)$ possible non-edges. By Observation~\ref{obs:endpoints} we only need to choose a red endpoint, reducing the search to $\bigO(n)$ possible choices in each iteration.

\subsection{{\greedyplus}: Speeding up {\greedy}\label{ss:greedyp}}
\label{ss:spg}
In practice, lazy evaluation~\cite{minoux2005accelerated} will typically reduce the number of function evaluations that Algorithm~\ref{A:GreedyBMAH} makes to $g$. 
Nonetheless, function evaluations are still expensive. 
They require  either solving a linear system of equations with $\bigO(n)$ variables, 
or directly computing the fundamental matrix $(\mathbf{I}-\mathbf{Q})^{-1}$ 
of an absorbing Markov chain with absorbing states $B$, 
where $\mathbf{Q}$ denotes the probability transition matrix between the transient red 
nodes~\cite{kemeny1983finite}.

\subsubsection{$(1 \pm \epsilon)$-estimation of $g$}
We show how to get a fast estimate~$\hat{g}$ that satisfies w.h.p.\
$\hat{g}(\cdot) \in (1 \pm \epsilon)g(\cdot)$,
for some error~$\epsilon >0$.

The high-level idea is to write $g$ as a certain quadratic form of the fundamental matrix of an absorbing Markov chain with absorbing states being the blue nodes. 
The fundamental matrix of an absorbing Markov chain can be written 
as an infinite matrix power series~\cite{kemeny1983finite}. 
We will estimate this infinite series by a truncated sum.
The truncated sum can be estimated efficiently by simulating random walks of bounded length, 
while the error induced by the truncation can also be shown to be small. 
The point of truncation will depend on certain graph parameters, such as the mixing time and average degree of the red nodes. Since most real-life graphs typically have bounded mixing time and bounded average degree, our approach is expected to work well in practice. 
This idea of truncating a matrix power series has been recently used 
by Peng et al.~\cite{peng2021local} 
and prior to that by Andoni et al.~\cite{andoni2018solving}.

The following holds for any connected graph, so without loss of generality 
we prove our estimates for $g(\emptyset)$.
Let $r = |R|$ throughout this proof. 
Let $h_R \in \mathbb{R}^{r \times 1}$ denote the vector of expected hitting times from the red nodes until absorption by the blue nodes. Let $\mathbf{Q} \in \mathbb{R}^{r \times r}$ be the transition matrix between red nodes, obtained by deleting the rows and columns of corresponding blue nodes in the random-walk transition matrix $\mathbf{P} =  \mathbf{D}^{-1}\mathbf{A}$. 
It is well-known~\cite{kemeny1983finite} that $h_R = (\mathbf{I}_r-\mathbf{Q})^{-1}e_{r}$.
Since $g(\emptyset)$ is the average of $h_R$, we can~write
\begin{equation*}
g(\emptyset) = \frac{1}{r}e_{r}^T(\mathbf{I}_r-\mathbf{Q})^{-1}e_{r} = \frac{1}{r}e_{r}^T \displaystyle\sum_{i=0}^{+\infty}\mathbf{Q}^{i}e_{r},
\end{equation*}
where the second equality follows from $\mathbf{Q}^{i} \to \mathbf{0}$, 
as $i \to +\infty$, since $|B| \geq 1$ by our assumption of valid bi\-partition
\cite[Theorem 1.11.1]{kemeny1983finite}.

Now define 
$p_1 \triangleq \frac{1}{r}e_{r}^T \sum_{i=0}^{\ell}\mathbf{Q}^{i}e_{r}$ and 
$p_2 \triangleq \frac{1}{r}e_{r}^T \sum_{i=\ell+1}^{+\infty}\mathbf{Q}^{i}e_{r}$.
We first show how to estimate $p_1$.

\begin{lemma}
\label{lem:p1}
Let $\epsilon, \delta \in (0,1)$.
For every $u \in R$, run $t \geq \frac{\ell^2}{\epsilon^2} \ln (\frac{2n}{\delta})$ absorbing random walks ($B$ are the absorbing nodes) with bounded length~$\ell$ and record how many steps are taken. Take the empirical average over these $t$ trials, for every $u$, and stack the results in a vector~$\hat{h}$. With probability at least $1-\delta$, $\hat{h} \in (1 \pm \epsilon) \sum_{i=0}^{\ell}\mathbf{Q}^{i}e_{r}$.
Hence, also with probability at least $1-\delta$, it holds that $\frac{1}{r}e_{r}^T \hat{h} \in (1 \pm \epsilon) p_1$.
\end{lemma}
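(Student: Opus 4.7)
The plan is to treat each random walk of bounded length $\ell$ as an i.i.d.\ sample from a bounded distribution whose mean is exactly the corresponding coordinate of $\sum_{i=0}^{\ell}\mathbf{Q}^i e_r$, and then combine a Hoeffding tail bound with a union bound over the red nodes.

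First I would check that the expectation matches. Because $(\mathbf{Q}^i e_r)_u = \Pr[\tau_u(B) > i]$, summing $i=0,\ldots,\ell$ gives $\mathbb{E}[\min(\tau_u(B),\ell+1)]$, which, up to a harmless indexing convention, is exactly the bounded walk length the algorithm records when starting at $u$. The samples thus lie deterministically in $[0,\ell]$, which is all that is needed for Hoeffding's inequality.

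The key step is to upgrade additive concentration to multiplicative concentration. The crucial observation is that $\mu_u := (\sum_{i=0}^{\ell}\mathbf{Q}^i e_r)_u \ge 1$ for every $u \in R$, since the $i=0$ term $(\mathbf{Q}^0 e_r)_u = 1$ already contributes $1$. Applying Hoeffding to the $t$ i.i.d.\ copies of the walk length then yields $\Pr[|\hat h(u)-\mu_u| \ge \epsilon\mu_u] \le 2\exp(-2t\epsilon^2\mu_u^2/\ell^2) \le 2\exp(-2t\epsilon^2/\ell^2)$, which is at most $\delta/n$ as soon as $t \ge \frac{\ell^2}{\epsilon^2}\ln(2n/\delta)$. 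A union bound over the $r \le n$ red coordinates then delivers the entry-wise guarantee $\hat h \in (1\pm\epsilon)\sum_{i=0}^{\ell}\mathbf{Q}^i e_r$ with probability at least $1-\delta$. Since entry-wise multiplicative sandwiching is preserved under non-negative linear combinations, averaging the coordinates immediately yields $\frac{1}{r}e_r^T\hat h \in (1\pm\epsilon)\, p_1$ with the same probability.

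Conceptually the proof is routine, and I do not anticipate a serious obstacle; the only delicate point is matching conventions so that the walk length actually recorded by the algorithm corresponds to $\sum_{i=0}^{\ell}(\mathbf{Q}^i e_r)_u$ rather than being off by one. This is pure bookkeeping and is resolved by the correct choice of indexing (counting the starting position at $u$ as a ``step'' at time $0$ in the sum).
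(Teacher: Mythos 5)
Your proposal is correct and follows essentially the same route as the paper: identify each coordinate of $\sum_{i=0}^{\ell}\mathbf{Q}^{i}e_{r}$ as the expected length of a truncated absorbing walk (the paper does this by counting visits to red nodes, you via the equivalent tail-sum identity), then apply Hoeffding plus a union bound, using that each mean is at least $1$ to convert additive into multiplicative error. The only difference is that you spell out the concentration step that the paper delegates to the cited \texttt{RePBubLik} lemma, which is a welcome addition rather than a divergence.
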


\begin{proof}
We can give a probabilistic interpretation to $p_1$.
We argue that it is the average of the hitting times---over all the red nodes as starting points---for absorbing random walks with bounded length $\ell$. 
In other words, the walk halts when either it has taken more than $\ell$ steps, 
or it hits a blue node, whichever comes first.
Consider such a bounded absorbing walk starting from $u \in R$. Let $n_{uv}$ be the random variable counting the number of times the walk visits red node $v \in R$. 
Let $n^i_{uv}$ be $1$ if this walk is in state $v$ after exactly $i$ steps, and $0$ otherwise. 
Clearly $n_{uv} = \sum_{i=0}^\ell n^i_{uv}$.
So the expectation satisfies
\begin{align*}
\E[n_{uv}] & = \sum_{i=0}^\ell \E[n^i_{uv}] 
             = \sum_{i=0}^\ell (1\cdot \mathbf{Q}^i_{uv} + 0 \cdot (1-\mathbf{Q}^i_{uv})) 
           = \sum_{i=0}^\ell \mathbf{Q}^i_{uv}.
\end{align*}
Thus $\sum_{i=0}^{\ell}\mathbf{Q}^{i}e_{r}$ is a vector where the $u$-th entry is the hitting time of an absorbing (with absorbing states $B$) bounded (with length $\ell$) random walk starting from red node $u$. And $p_1$ is the average of this vector.
To find a $(1 \pm \epsilon)$-estimation of $p_1$ it suffices to find a  $(1 \pm \epsilon)$-estimation of each entry of $\sum_{i=0}^{\ell}\mathbf{Q}^{i}e_{r}$.
Now it is straightforward to estimate every $u$-th entry of $\sum_{i=0}^{\ell}\mathbf{Q}^{i}e_{r}$ within a factor of $(1 \pm \epsilon)$.
Run $t$ walks from $u$ and compute the empirical average of their lengths as the estimator.
Such an estimation is exactly the topic of the \texttt{RePBubLik} paper \cite[Lemma 4.3]{repbub}, and the claim in Lemma~\ref{lem:p1} follows easily from Hoeffding's inequality and the union bound.
\end{proof}

Next we show how to bound $p_2$ as a small fraction of $g(\emptyset)$, for a sufficiently large choice of $\ell$.

\begin{lemma}
\label{lem:p2}
If $\ell \geq \frac{\log (d_{R}/\epsilon(1-\lambda))}{\log (1/\lambda)} -1$, then $p_2 \leq \epsilon g(\emptyset)$. 
\end{lemma}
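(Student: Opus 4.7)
The plan is to exploit reversibility of the walk: although $\mathbf{Q}$ is not symmetric in general, it is similar to the symmetric matrix $\mathbf{M} = \mathbf{D}_R^{1/2}\mathbf{Q}\mathbf{D}_R^{-1/2}$, where $\mathbf{D}_R$ denotes the restriction of the degree matrix to the red nodes. Because at least one blue (absorbing) node exists, $\mathbf{M}$ has spectral norm $\lambda = \|\mathbf{M}\|_2 < 1$; this is the $\lambda$ in the lemma, and it controls the decay of $\mathbf{Q}^i$. The rest is a geometric-series estimate.

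First I would bound a single tail term $e_r^T \mathbf{Q}^i e_r$ using the factorization $\mathbf{Q}^i = \mathbf{D}_R^{-1/2}\mathbf{M}^i \mathbf{D}_R^{1/2}$ and Cauchy--Schwarz:
\[
e_r^T \mathbf{Q}^i e_r \;\leq\; \bigl\|\mathbf{D}_R^{-1/2} e_r\bigr\|_2 \cdot \bigl\|\mathbf{M}^i\bigr\|_2 \cdot \bigl\|\mathbf{D}_R^{1/2} e_r\bigr\|_2 \;\leq\; \sqrt{r}\cdot \lambda^i \cdot \sqrt{r\,d_R},
\]
using $d_G(u)\geq 1$ for every red node to bound the first factor by $\sqrt{r}$, and $\|\mathbf{D}_R^{1/2}e_r\|_2^2 = \sum_{u\in R} d_G(u) = r\, d_R$ for the third.

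Summing the geometric tail then yields
\[
p_2 \;=\; \frac{1}{r}\sum_{i=\ell+1}^{\infty} e_r^T \mathbf{Q}^i e_r \;\leq\; \sqrt{d_R}\sum_{i=\ell+1}^\infty \lambda^i \;=\; \frac{\sqrt{d_R}\,\lambda^{\ell+1}}{1-\lambda}.
\]
Since $g(\emptyset)\geq 1$ (hitting any blue node from a red start takes at least one step), it suffices to force the right-hand side to be $\leq \epsilon$. Taking logarithms and using $\sqrt{d_R}\leq d_R$ when $d_R\geq 1$ delivers the stated condition $\ell \geq \log(d_R/(\epsilon(1-\lambda)))/\log(1/\lambda) - 1$.

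The main obstacle is handling the asymmetry of $\mathbf{Q}$: one cannot invoke a spectral radius bound directly on $e_r^T \mathbf{Q}^i e_r$ since $\mathbf{Q}$'s eigenvectors need not be orthogonal. The degree-weighted similarity to $\mathbf{M}$ resolves this, at the price of the $\sqrt{d_R}$ factor coming from the imbalance between $\|\mathbf{D}_R^{\pm 1/2}e_r\|_2$. This imbalance is precisely what the average red degree $d_R$ measures, so it matches the form of the stated bound and explains why the point of truncation scales with $d_R$ and with the inverse spectral gap $1/(1-\lambda)$, as advertised in the introduction to Section~\ref{ss:spg}.
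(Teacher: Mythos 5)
Your proof is correct and takes essentially the same route as the paper's: symmetrize $\mathbf{Q}$ via the similarity $\mathbf{M}=\mathbf{D}_R^{1/2}\mathbf{Q}\mathbf{D}_R^{-1/2}$, use $\norm{\mathbf{M}^i}_2=\lambda^i<1$, and sum the geometric tail. The only (immaterial) difference is that you bound $\norm{\mathbf{D}_R^{-1/2}e_r}_2\leq\sqrt{r}$ directly, obtaining the slightly sharper constant $\sqrt{d_R}$, whereas the paper uses $\sum_i 1/d_i\leq\sum_i d_i$ to get $d_R$; both satisfy the stated choice of $\ell$.
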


\begin{proof}
Let $\mathbf{D}_R \in \mathbb{R}^{r \times r}$ be a diagonal matrix with diagonal entries being the degrees of the red nodes in $G$. All entries $d_1, \ldots, d_r$ are at least one, since $G$ is assumed to be connected. 
Recall that $d_R = \frac{d_1+\ldots +d_r}{r}$ is the average of these entries.

Let $\mathbf{A}_R \in \mathbb{R}^{r \times r}$ be the adjacency matrix of $G[R]$.
Note that $\mathbf{Q} = \mathbf{D}_R^{-1}\mathbf{A}_R$. Observe that the matrix
$\mathbf{M} = \mathbf{D}_R^{-1/2} \mathbf{A}_R \mathbf{D}_R^{-1/2} = \mathbf{D}_R^{1/2} \mathbf{Q} \mathbf{D}_R^{-1/2}$ is real-valued, symmetric, and similar to $\mathbf{Q}$. So the real-valued eigenvalues $\lambda_1 \geq \lambda_2 \geq \ldots \geq \lambda_r$ of $\mathbf{M}$ are also the eigenvalues of $\mathbf{Q}$ (by similarity) and $\mathbf{M}$ has a orthonormal eigendecomposition $\mathbf{M} = \mathbf{U}\mathbf{\Lambda}\mathbf{U}^T$. Let $\lambda = \max\{|\lambda_1|,|\lambda_r|\}$ be the spectral radius of $\mathbf{M}$ (and of $\mathbf{Q}$).
By symmetry of $\mathbf{M}$, we have that $\norm{\mathbf{M}}_2 = \lambda$. We know $\lambda < 1$, since $\mathbf{Q}$ is a \emph{weakly-chained substochastic matrix},\footnote{This is the case when for all $i$: either the $i$-th row of $\mathbf{Q}$ sums to a value $<1$, or from $i$ it is always possible to reach another $j$ for which every row sums to a value $<1$.} 
\cite[Corollary 2.6]{Azimzadeh_2018}.
Hence,

\begin{align*}
p_2 &= \norm{p_2}_2 \leq \frac{1}{r} \norm{\mathbf{D}_R^{-1/2}e_{r}}_2 \sum_{i=\ell+1}^{+\infty}\norm{\mathbf{M}^{i}}_2 \norm{\mathbf{D}_R^{1/2}e_{r}}_2 \\
&\leq \frac{1}{r} \norm{\mathbf{D}_R^{-1/2}e_{r}}_2 \sum_{i=\ell+1}^{+\infty}\norm{\mathbf{M}}^{i}_2 \norm{\mathbf{D}_R^{1/2}e_{r}}_2 \\
&\leq \frac{\lambda^{\ell+1}}{r(1-\lambda)}\sqrt{\left(d_1 + \ldots + d_r \right)\left(\frac{1}{d_1} + \ldots + \frac{1}{d_r} \right)} \\
&\leq \frac{\lambda^{\ell+1}}{(1-\lambda)}d_R \leq \epsilon \leq \epsilon g(\emptyset).
\end{align*}
The first and second inequality use the similarity between $\mathbf{Q}$ and~$\mathbf{M}$, 
the ortho\-normal eigen\-decomposition of $\mathbf{M}$, and matrix norm properties.
The third inequality follows from $\lambda < 1$.  
The fourth inequality holds as $\sum_i 1/d_i \leq \sum_i d_i$, as $d_i \geq 1$, and 
by plugging in the choice of $\ell \geq \frac{\log (d_{R}/\epsilon(1-\lambda))}{\log (1/\lambda)} -1$. 
The final step follows from $g(\emptyset) \geq 1$.
\end{proof}

We combine Lemma~\ref{lem:p1} and Lemma~\ref{lem:p2} to obtain our final result.

\begin{lemma}
\label{lem:p3}
Given an undirected connected graph $G = (V, E)$, with a valid bi\-partition $V = \{R,B\}$,
we can find an estimate $\hat{g}(\emptyset) \in (1 \pm \epsilon)g(\emptyset)$ 
with probability at least $1-1/{n^c}$ (for some $c \geq 1$) in time
\begin{equation}
\bigO \left(\frac{c \, n \, \ln(n) \, \log^3 (d_{R}/\epsilon(1-\lambda))}{\epsilon^2 \, \log^3 (1/\lambda)} \right).
\end{equation}
\end{lemma}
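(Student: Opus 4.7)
The plan is to combine Lemma~\ref{lem:p1} and Lemma~\ref{lem:p2} using the decomposition $g(\emptyset) = p_1 + p_2$. The truncation length $\ell$ will be chosen large enough that the tail $p_2$ is negligible compared to $g(\emptyset)$; the truncated head $p_1$ will be estimated by Monte Carlo sampling of absorbing random walks; and the returned estimator is simply $\hat{g}(\emptyset) := \hat{p}_1$.

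Concretely, I would first set $\ell = \lceil \log(2 d_R/(\epsilon(1-\lambda)))/\log(1/\lambda) \rceil$, which by Lemma~\ref{lem:p2} applied with $\epsilon/2$ in place of $\epsilon$ guarantees $p_2 \leq (\epsilon/2)\,g(\emptyset)$. Then I would invoke Lemma~\ref{lem:p1} with accuracy $\epsilon/2$ and failure probability $\delta = n^{-c}$; this requires $t = \bigO(c\,\ell^2 \log(n)/\epsilon^2)$ walks per red node and produces an estimate $\hat{p}_1 \in (1 \pm \epsilon/2)\,p_1$, except with probability at most $n^{-c}$. The upper bound $\hat{p}_1 \leq (1+\epsilon/2)\,p_1 \leq (1+\epsilon/2)\,g(\emptyset) \leq (1+\epsilon)\,g(\emptyset)$ is immediate. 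For the lower bound, combine $\hat{p}_1 \geq (1-\epsilon/2)\,p_1$ with $p_1 = g(\emptyset) - p_2 \geq (1-\epsilon/2)\,g(\emptyset)$ to obtain $\hat{p}_1 \geq (1-\epsilon/2)^2\,g(\emptyset) \geq (1-\epsilon)\,g(\emptyset)$, where the quadratic $\epsilon^2/4$ term is dropped.

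For the running time, simulating one random-walk step costs $\bigO(1)$ under an adjacency-list representation, so the total cost is $r \cdot t \cdot \ell = \bigO(n \cdot c\,\ell^2 \log(n)/\epsilon^2 \cdot \ell) = \bigO(c\,n\,\log(n)\,\ell^3/\epsilon^2)$; substituting $\ell = \Theta(\log(d_R/(\epsilon(1-\lambda)))/\log(1/\lambda))$ yields exactly the stated bound. The main obstacle here is purely bookkeeping rather than mathematical: splitting the $\epsilon$ budget between the truncation error and the sampling error, and handling the union bound implicit in Lemma~\ref{lem:p1} across the $r \leq n$ red starting points. The two preceding lemmas already carry out the heavy lifting, namely the spectral argument bounding $p_2$ and the Hoeffding-plus-union-bound argument estimating each entry of the truncated series for $p_1$.
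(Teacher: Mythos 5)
Your proposal is correct and follows essentially the same route as the paper's proof: split the accuracy budget as $\epsilon/2$ for the truncation error (Lemma~\ref{lem:p2}) and $\epsilon/2$ for the sampling error (Lemma~\ref{lem:p1}), take $\hat{g}(\emptyset)=\hat{p}_1$, and multiply $n$ starting nodes by $t$ walks by $\ell$ steps to get the running time. The only difference is cosmetic — you make explicit the application of Lemma~\ref{lem:p2} with parameter $\epsilon/2$ and the $O(1)$ cost per walk step, both of which the paper leaves implicit.
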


\begin{proof}
We use the estimate $\hat{g} = \frac{1}{r}e_{r}^T \hat{h}$ from Lemma~\ref{lem:p1} as our estimate of $g(\emptyset)$ with parameter $\epsilon' = \epsilon/2$. Recall that $g(\emptyset) = p_1 + p_2$. Since $\hat{g}$ is an estimation of $p_1$ by Lemma~\ref{lem:p1}, it holds (w.h.p.) that $\hat{g} \leq (1+\epsilon/2)p_1 \leq (1+\epsilon)g(\emptyset)$ since $p_2 \geq 0$.
On the other hand, $\hat{g} \geq (1-\epsilon/2)p_1 = (1-\epsilon/2)(g(\emptyset)-p_2) \geq (1-\epsilon/2)(1-\epsilon/2)g(\emptyset) \geq (1-\epsilon)g(\emptyset)$, where the second to last inequality uses Lemma~\ref{lem:p2}.

The time complexity follows from Lemma~\ref{lem:p1} by plugging in $\delta = 1/{n^c}$ and the choice of $\ell$ from Lemma~\ref{lem:p2}. We need to do this for every red node and there could be $\bigO(n)$ of them. Every bounded random walk also needs at most $\ell$ steps. 
\end{proof}

\subsubsection{Greedy on the estimated values}
Lemma~\ref{lem:p3} states that we can find good estimates $\hat{g}(\cdot) \in (1 \pm \epsilon) g(\cdot)$ relatively fast and with good probability. A natural question to ask is whether running the greedy bi\-criteria approach from Algorithm~\ref{A:GreedyBMAH} on $\hat{g}$ (instead of~$g$) still ensures some form of approximation guarantee, given that~$\hat{g}$ is not guaranteed to be super\-modular.
A similar question has been answered affirmatively for the classic greedy algorithm of \citet{nemhauser1978analysis} for maximizing a nonnegative monotone sub\-modular function under a cardinality constraint, 
by the work of \citet{Horel}. 
They showed that when $\epsilon < 1/k$, 
the greedy still gives a constant-factor approximation~\cite[Theorem 5]{Horel}.

We show an analogue result for the bi\-criteria greedy approach~\cite{liberty2017greedy} 
to minimization of our monotone decreasing nonnegative super\-modular objective function $g$, under a cardinality constraint. 

\begin{theorem}
Algorithm~\ref{algoG+} returns a set $F_\tau$ that satisfies
\begin{equation}
g(F_\tau) \leq (2+\epsilon) \text{\normalfont OPT},
\end{equation}
where $\text{ \normalfont OPT} = g(O)$ and $O$ is an optimal solution of \BMAH with at most $k$ edges.
\end{theorem}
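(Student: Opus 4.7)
The plan is to run the classical supermodular-minimization argument on the \emph{true} objective $g$ while booking an extra per-step multiplicative slack $\alpha := (1+\epsilon')/(1-\epsilon')$ that accounts for line~3 of Algorithm~\ref{algoG+} using $\hat{g}$ rather than $g$. I would first fix, via Lemma~\ref{lem:p3} and a union bound over the $\tO(kn)$ estimator calls performed by the algorithm, an estimate satisfying $\hat{g}(\cdot) \in (1 \pm \epsilon')\,g(\cdot)$ simultaneously with high probability, where $\epsilon'$ will be chosen of order $1/k$.

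The key step is a per-iteration inequality. For any optimal $O$ with $|O| \le k$ and the greedy pick $e_i^* = \argmin_e \hat{g}(F_{i-1}\cup\{e\})$, sandwiching $\hat{g}(F_{i-1}\cup\{e_i^*\}) \le \hat{g}(F_{i-1}\cup\{o\})$ between $(1\pm\epsilon')g(\cdot)$ yields $g(F_i) \le \alpha\, g(F_{i-1}\cup\{o\})$ for every $o \in O$. Applying supermodularity and monotonicity of $g$ (Observations~\ref{obs:mono} and~\ref{obs:hitsup}) in the standard averaging form gives $\min_{o\in O} g(F_{i-1}\cup\{o\}) \le (1-1/k)\,g(F_{i-1}) + (1/k)\,g(O)$, and hence
\begin{equation*}
g(F_i) \;\le\; \alpha(1-1/k)\,g(F_{i-1}) + (\alpha/k)\,g(O).
\end{equation*}
Setting $\beta := \alpha(1-1/k)$ and unrolling from $F_0 = \emptyset$ produces $g(F_\tau) \le \beta^\tau g(\emptyset) + \tfrac{\alpha/k}{1-\beta}\,g(O)$. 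A short calculation of the fixed-point coefficient $\tfrac{\alpha/k}{1-\beta} = \tfrac{\alpha}{k(1-\alpha)+\alpha}$ shows that taking $\epsilon'$ of order $1/k$ (so that $\alpha - 1$ is a suitable multiple of $1/k$) keeps this coefficient below $2+\epsilon/2$; the transient term is handled by $g(\emptyset) \le n^3$, $\beta \le 1 - 1/(2k)$, and $\tau = \lceil k \ln(n^3/\epsilon)\rceil$, which forces $\beta^\tau g(\emptyset) \le (\epsilon/2)\,g(O)$ using $g(O)\ge 1$. Summing the two contributions yields the claimed $g(F_\tau) \le (2+\epsilon)\,g(O)$.

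The main obstacle is the delicate balancing of $\epsilon'$ against $k$: unlike the exact-oracle analysis of Lemma~\ref{lem:perA1}, the recursion no longer contracts purely toward $g(O)$, since $\alpha$ enters its fixed point, and even a constant-factor error inflates the ratio above $2$. One must verify that Lemma~\ref{lem:p3} can deliver accuracy $\Theta(1/k)$ within the stated runtime and that inflating the failure constant $c$ there still lets a union bound over all iterations succeed with high probability. This tightness is unsurprising given Horel's analogous $\epsilon < 1/k$ requirement in the submodular-maximization setting; the extra factor of $2$ in the bound is the price paid for replacing the supermodular $g$ by the non-supermodular surrogate $\hat{g}$.
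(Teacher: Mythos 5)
Your proposal is correct and follows essentially the same route as the paper: the same per-iteration inequality obtained from supermodularity of $g$ over $O$ combined with the $(1\pm\epsilon')$ sandwich around the greedy choice, the same recursion $g(F_{i+1})\le \alpha(1-1/k)\,g(F_i)+(\alpha/k)\,\mathrm{OPT}$, the same split into a transient term and a fixed-point term, and the same choice $\epsilon'=\Theta(1/k)$ to keep the fixed-point coefficient below $2$. The only slip is the iteration count: with the contraction rate $\beta\le 1-1/(2k)$ you need $\tau=\lceil 2k\ln(n^3/\epsilon)\rceil$ (which is what Algorithm~\ref{algoG+} actually runs) to force $\beta^\tau g(\emptyset)\le \epsilon\,\mathrm{OPT}$; with $\tau=\lceil k\ln(n^3/\epsilon)\rceil$ as written the transient term is only $O(n^{3/2}\sqrt{\epsilon})$, which is not small enough.
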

\begin{proof}
By non-negativity, super\-modularity and monotonicty of~$g$, 
we can write for iteration $i$ in Algorithm~\ref{algoG+}
\begin{align*}
g(F_i)  & - \text{OPT} \leq g(F_i) - g(F_i \cup O) \leq \sum_{e \in O} g(F_i) - g(F_i \cup \{e\})  \\
 & \leq \sum_{e \in O} g(F_i) - \frac{1}{1+\epsilon}\hat{g}(F_i \cup \{e\})
   \leq \sum_{e \in O} g(F_i) - \frac{1}{1+\epsilon}\hat{g}(F_{i+1}) \\
 & \leq \sum_{e \in O} g(F_i) - \frac{1-\epsilon}{1+\epsilon}g(F_{i+1})
   \leq k \left(g(F_i) - \frac{1-\epsilon}{1+\epsilon}g(F_{i+1})\right).
\end{align*}
In the third and fifth inequality we made use of $\hat{g}(\cdot) \in (1 \pm \epsilon) g(\cdot)$, and the fourth inequality utilizes the greedy choice in Algorithm~\ref{algoG+}. So we have the following recursion
\begin{equation}
g(F_{i+1}) \leq \left(\frac{1+\epsilon}{1-\epsilon}\right)(1-1/k)g(F_i) + \left(\frac{1+\epsilon}{1-\epsilon}\right)\frac{\text{OPT}}{k}.
\end{equation}
This recursion is of the form of $u_{i+1} \leq au_i+b$, which after $\tau$ iterations satisfies $u_{\tau} \leq a^{\tau}u_0 + \frac{1-a^{\tau}}{1-a}b$. Now write $\epsilon = \delta/k$ for $\delta \in (0,1/4]$. 
With our specific choice of $a = \frac{1+\epsilon}{1-\epsilon}(1-1/k)$ and  
$\tau = \ceil{2k \ln{(\frac{n^3}{\epsilon})}}$  
we upper bound $a^{\tau}u_0 = a^{\tau}g(\emptyset)$ as 
\begin{align*}
a^{\tau}g(\emptyset) & = \left(\left(\frac{k+\delta}{k-\delta}\right)(1-1/k)\right)^{{\tau}}g(\emptyset)
\leq e^{(2\delta-1)2\ln(n^3/\epsilon)}g(\emptyset) \\
&\leq e^{-\ln(n^3/\epsilon)}g(\emptyset) = \frac{\epsilon}{n^3}g(\emptyset) \leq \epsilon \text{OPT}.
\end{align*}
where the first inequality follows from Fact~\ref{fact:inequality} and the last inequality from the discussion in the proof of Lemma~\ref{lem:perA1}.

\begin{fact}
\label{fact:inequality}
Let $x \in \mathbb{R}$ with $|x|\leq 1$. Let $k \geq 1$ be a positive integer. Then it holds that
\begin{equation}
\left(\frac{k+x}{k-x} \left(1-\frac{1}{k}\right)\right)^k \leq e^{2x-1}.
\end{equation}
\end{fact}

Next we upper bound the second part $\frac{1-a^{\tau}}{1-a}b$.
It can be verified that $0 \leq a<1$ for parameters $\epsilon = \delta/{k}$,  $\delta \leq 1/4$ and $ k \geq 1$. Thus,
\begin{align*}
\frac{1-a^{\tau}}{1-a}b \leq \frac{1}{1-a}b = \frac{1+\epsilon}{1-2\epsilon k + \epsilon}\,\text{OPT} \leq \frac{1}{1-2\delta}\,\text{OPT} \leq 2\,\text{OPT}.
\end{align*}

\end{proof}
\begin{algorithm}[t]
\caption{\label{algoG+}[{\greedyplus}] bicrit. $(2+\epsilon)$-approximation for \BMAH{}.}
\begin{algorithmic}[1]
\Require $G=(R \cup B, E)$, parameter $k \geq 1$, error $\epsilon \in (0,\frac{1}{4k}]$ and estimates $\hat{g}(\cdot) \in (1 \pm \epsilon) g(\cdot)$ as found by Lemma~\ref{lem:p3}.
\State $F_0 \leftarrow \emptyset$

\While{$(i \leftarrow 1; i \leq \ceil{2k \ln{(\frac{n^3}{\epsilon})}}; i \leftarrow i + 1)$}
\State $F_{i} \leftarrow F_{i-1} \cup \argmin_{e \in (R \times B) \setminus (E \cup F_{i-1})} \hat{g}(F_{i-1}\cup{\{e\}})$.
\EndWhile
\Ensure $F_{i}$.
\end{algorithmic}
\end{algorithm}

\section{Algorithms for minimizing\\maximum hitting time}
\label{sec:bmmh}
Next we turn our attention to the \BMMH problem.
We first show that a $\alpha$-approximation for \BMAH results in an $(2|R|^{3/4}\alpha)$-approx\-i\-ma\-tion 
for \BMMH.
We then give a completely different approach based on 
the asymmetric $k$-center problem~\cite{kariv1979algorithmic}.

\subsection{Relating the average and the maximum hitting times \label{ss:relatingavgm}}
The \BMMH objective function $f$ is not super\-modular (Observation~\ref{obs:nonweaks}), although we know that for every $r \in R$ each individual hitting time $H_{G + F}(r, B)$ is a super\-modular and monotonically decreasing set function of $F$ (Observation~\ref{obs:mono} and Observation~\ref{obs:hitsup}).

We exploit this idea, by upper bounding $f$ by an expression that involves $g$ (Theorem~\ref{thm:boundsF}).
Since also $g \leq f$, we can use any approximation algorithm for minimizing $g$ 
(for instance, the methods in Sections~\ref{ss:greedy} and~\ref{ss:greedyp}) 
and use the upper bound to derive an approximation guarantee for~\BMMH.

Theorem~\ref{thm:boundsF} relates the average and maximum hitting time from red to blue nodes.
We credit Yuval Peres \cite{Peres} for the proof of Theorem~\ref{thm:boundsF},
in case the set $B$ is a singleton, and extend the proof here when $B$ contains more than one node.
Theorem~\ref{thm:boundsF} gives a remarkable non-trivial $\bigO(n^{3/4})$ upper bound on the ratio of the maximum and average hitting times between two groups that constitute a bi\-partition of the graph.
Theorem~\ref{thm:counterboundsF} shows that there exists graphs for which this is tight. In contrast, if we disregard the bi\-partition, and take the average and the maximum hitting time over \emph{all nodepairs} instead, then it is known that this ratio can grow as bad as $\bigO(n)$, but not worse \cite{CSPOST}.
\begin{theorem}
\label{thm:boundsF}
Given an undirected connected graph $G$ with a valid bipartation $V=\{R,B\}$. Let $F \subseteq (R \times B) \setminus E$ be a set of non-edges. Then,
\begin{equation}\label{eq:boundsF}
g(F) \leq f(F) \leq 2|R|^{3/4}g(F).
\end{equation}
\end{theorem}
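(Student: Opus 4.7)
The lower bound $g(F)\le f(F)$ is immediate, since $g(F)$ is an average of quantities each bounded by $f(F)$. For the non-trivial upper bound, my plan is first to reduce to the single-sink setting that Peres treats, and then to run a random-walk concentration argument on the reduced instance. To reduce: in $G+F$, collapse the blue set $B$ into a single super-vertex $b^\star$, producing a connected multigraph $G'$ on $|R|+1$ vertices. Since a walk started in $R$ hits $B$ in $G+F$ exactly when the projected walk hits $b^\star$ in $G'$, we have $H_{G+F}(r,B)=H_{G'}(r,b^\star)$ for every $r\in R$, so $f(F)$ and $g(F)$ coincide with the maximum and average of the single-sink hitting time on $G'$. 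It therefore suffices to prove the single-sink statement: for any connected multigraph on $n+1$ vertices with sink $b^\star$ and $n=|R|$,
\[ \max_{v\neq b^\star} H(v,b^\star)\le 2\, n^{3/4}\cdot\tfrac{1}{n}\sum_{v\neq b^\star} H(v,b^\star). \]

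Let $v^\star$ be the maximizer, write $T = H(v^\star, b^\star)$, and let $A$ denote the average. Assume toward a contradiction that $T>2 n^{3/4}A$; the plan is to exhibit at least $\sqrt{n}$ non-sink vertices whose hitting time to $b^\star$ is at least $T/(2 n^{1/4})$, since this alone would force $A \ge T/(2 n^{3/4})$. To find them, run a walk from $v^\star$ and let $\sigma$ be the first hit of $b^\star$. By Markov's inequality, $\Pr(\sigma\ge T/2)\ge 1/2$, so with constant probability the walk traces out a set $S$ of non-sink vertices during its first $T/2$ steps. I would then argue by a dichotomy on $|S|$ with threshold $\sqrt{n}$: if $|S|\ge\sqrt{n}$, the strong Markov property applied at the first-visit times of the elements of $S$ shows that a large fraction of $R$ inherits a sizeable residual expected time to $b^\star$, giving $\Omega(\sqrt{n})$ vertices of depth $\Omega(T/\sqrt{n})$; if $|S|<\sqrt{n}$, the walk must revisit the vertices of $S$ many times, and the standard Green-function / commute-time identity for reversible chains forces each $v\in S$ that carries a substantial share of the occupation measure to be deep itself, i.e.\ $H(v,b^\star)\gtrsim T/\sqrt{n}$. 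Either branch produces enough deep vertices to contradict $T>2 n^{3/4}A$.

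The main obstacle will be making this dichotomy quantitatively sharp, so that the balanced threshold $\sqrt n$ actually produces the exponent $3/4$ rather than a weaker one; both branches lean on the reversibility of the random walk on $G'$ (which is preserved by contraction because the simple random walk on a multigraph is still reversible w.r.t.\ the degree measure) and on careful applications of the strong Markov property at first-hitting times of the selected deep vertices. A secondary subtlety is checking that the contraction step does not lose sharpness: since hitting times from non-blue vertices depend only on the out-transition probabilities at non-blue vertices, and the walk is stopped upon hitting $b^\star$ anyway, no relevant information is discarded. Once the balancing is executed, the matching construction in Theorem~\ref{thm:counterboundsF} confirms that the $|R|^{3/4}$ exponent cannot be improved, so no additional analytic effort would be warranted beyond tuning the leading constant.
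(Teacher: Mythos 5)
Your lower bound and the reduction to a single absorbing vertex $b^\star$ are both fine, but the heart of your upper-bound argument contains a step that is simply false and a dichotomy that does not close. Markov's inequality gives $\Pr(\sigma\ge 2T)\le 1/2$, not $\Pr(\sigma\ge T/2)\ge 1/2$; the latter is an \emph{anti-concentration} statement that fails for hitting times in general (from a vertex that steps to $b^\star$ with probability $0.99$ and otherwise enters a deep trap, $\sigma<T/2$ with probability $0.99$ while $\E\sigma$ is dominated by the rare excursion). Without it, your random set $S$ of vertices visited "during the first $T/2$ steps" is not available with constant probability. Even granting it, neither branch produces the $\sqrt n$ vertices of depth $T/(2n^{1/4})$ you need. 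In the branch $|S|\ge\sqrt n$, being visited en route to $b^\star$ does not make a vertex deep: every trajectory passes through a neighbour of $b^\star$ immediately before absorption, and that neighbour may satisfy $H(v,b^\star)=O(1)$; the decomposition $H(v^\star,b^\star)=\E[\tau_v]+H(v,b^\star)$ requires $v$ to be hit before $b^\star$ almost surely, and the first term can absorb essentially all of $T$. In the branch $|S|<\sqrt n$, the identity $G_{b^\star}(v,v)=\pi(v)\bigl(H(v,b^\star)+H(b^\star,v)\bigr)$ lower-bounds the \emph{commute} time, not $H(v,b^\star)$, and in any case yields only one deep vertex, which gives the much weaker $T\le n^{3/2}A$. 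So the "quantitative sharpening" you defer is not a technicality --- it is the entire proof, and as sketched it does not reach the exponent $3/4$.

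The paper argues in the opposite direction and sidesteps all of these issues. Fix $\alpha>0$ and let $S=\{v: H_G(v,B)\le \alpha\, g(\emptyset)\}$. Markov's inequality applied to the \emph{average} (the legitimate direction) gives $|S^c|<|R|/\alpha$, i.e.\ there are few deep vertices rather than many. Contracting $S$ to a single node $s$ (keeping one edge per neighbour) yields a connected simple graph on $|S^c|+1$ nodes, where the universal cubic bound gives $H_{G^*}(v,s)\le |S^c|^3$, and a coupling shows $H_G(v,S)\le H_{G^*}(v,s)$. Since $S$ is hit almost surely, the strong Markov property yields $H_G(v,B)\le H_G(v,S)+\max_{u\in S}H_G(u,B)\le \bigl((|R|/\alpha)^3+\alpha\bigr)g(\emptyset)$ for every $v$, and $\alpha=|R|^{3/4}$ balances the two terms. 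No anti-concentration and no occupation-measure counting are needed; if you want to salvage your approach you would have to replace the reverse Markov step by an argument of this deterministic, worst-case type.
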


\begin{proof}
It clearly suffices to prove this for $F = \emptyset$.
Since $G$ is connected, the hitting time between any two nodes is finite-valued.
The first inequality is immediate since the maximum is not less than the average.
Consider the second inequality. Let $\alpha>0$ be some real number, which we will later choose carefully. Categorize the nodes into two groups, the ones with a ``high'' hitting time and the ones with a ``low'' hitting time to the blue nodes. More concretely, define
\begin{equation*}
S=\{v \in V: H_{G}(v,B) \leq \alpha g(\emptyset)\},
\end{equation*}
and $S^c$ as the complement of this set.
Since
\begin{equation*}
|R| g(\emptyset) = \sum_{r \in R} H_{G}(r, B) \geq \sum_{r \in S^c} H_{G}(r, B) > |S^c|\alpha g(\emptyset),
\end{equation*}
it follows that $|S^c| < |R|/\alpha$. The first inequality holds because $S^c \subseteq R$ (since $B \subseteq S$), and the second inequality uses the definition of $S^c$.

Now consider the graph $G^*$ with $|S^c|+1$ nodes, obtained by the following contraction in $G$: 
All nodes in $S$ are contracted to a single supernode $s$. If $w \in S^c$ has several edges connecting it to $S$ in $G$, we only keep one of these edges. This edge connects $w$ to $s$ in $G^*$.
This contraction process has the property that for all $v \in S^c$ it holds 
$H_G(v,S) \leq H_{G^*}(v,s)$, which is easily proved by a straightforward coupling argument.
Now we can bound $H_{G^*}(v,s)$. 
Since $G^*$ is connected, contains no multi-edges, and has $|S^c|+1$ nodes, it follows that $H_{G^*}(v,s) \leq |S^c|^3$ by a well-known result (see, for instance, \citet{LAWLER198685}, and \citet{Wrinkler}).
By the strong Markov property \cite[Appendix A.3]{levin2017markov}, it holds for all $v \in V:$
\begin{align}\label{lasteq:prop}
H_G(v,B) &\leq H_G(v, S) + \max_{u \in S} H_G(u,B) \notag\\
&\leq |S^c|^3 + \alpha g(\emptyset) < \big((|R|/{\alpha})^3 + \alpha \big)g(\emptyset),
\end{align}
where the last inequality holds because $|S^c| < |R|/\alpha$ and $g(\emptyset) \geq 1$.
Now Eq.~(\ref{lasteq:prop}) is minimized for the choice $\alpha = |R|^{3/4}$, and hence $f(\emptyset) \leq 2|R|^{3/4}g(\emptyset)$.
\end{proof}

To see the tightness of the upper bound in Theorem~\ref{thm:boundsF}, consider the following family of graphs.
\begin{theorem}
\label{thm:counterboundsF}
There exists graphs for which the ratio $f(\emptyset)/g(\emptyset) \\ \in \Omega(n^{3/4})$.
\end{theorem}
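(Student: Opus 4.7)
The plan is to exhibit an explicit family $\{G_n\}$ achieving $f(\emptyset)/g(\emptyset) = \Omega(n^{3/4})$. Fix $k = \lfloor n^{1/4}\rfloor$ and set $B = \{b\}$. Build a \emph{lollipop} $L$: a clique on $k$ red vertices $\{v, u_1, \dots, u_{k-1}\}$ together with a path $v = w_0, w_1, \dots, w_k, b$ of length $k+1$, so $L$ has $2k+1$ vertices. Attach $n - 2k - 1$ additional red leaves directly to $b$. The resulting graph $G$ is connected, has $n$ vertices, and admits the valid bipartition $B = \{b\}$, $|R| = n-1$.

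The first step is to show $f(\emptyset) = \Omega(n^{3/4})$ by analysing $H_G(v,b)$. The crucial observation is that the walk is absorbed at $b$ and every pendant leaf is reachable only through $b$, so the walk never enters a leaf before absorption; consequently $H_G(v,b) = H_L(v,b)$. Inside $L$ the effective resistance $R_{\mathrm{eff}}(v,b)$ equals $k+1$, since the clique is a dead-end cluster and all current must flow along the path. With $|E_L| = \binom{k}{2} + (k+1)$, the commute-time identity gives
\begin{equation*}
H_L(v,b) + H_L(b,v) = 2|E_L|\,R_{\mathrm{eff}}(v,b) = (k^2 + k + 2)(k+1) = \Theta(k^3).
\end{equation*}
For the reverse direction $H_L(b,v)$, observe that until the walk first hits $v$ it cannot enter the clique (the clique is attached to the rest of $L$ only through $v$), so it behaves as a simple random walk on the path $w_0, w_1, \dots, w_k, b$ started at $b$; the standard closed form for path hitting times gives $H_L(b,v) = (k+1)^2 = \Theta(k^2)$. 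Subtracting yields $H_L(v,b) = \Theta(k^3) = \Theta(n^{3/4})$, hence $f(\emptyset) \geq H_G(v,b) = \Omega(n^{3/4})$.

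The second step is to bound $g(\emptyset) = O(1)$. Each of the $n - 2k - 1$ leaves contributes exactly $1$ to the sum defining $g$. For each of the $2k$ red lollipop vertices $u$, the same leaf-irrelevance argument gives $H_G(u,b) = H_L(u,b) \leq H_L(u,v) + H_L(v,b)$ by the triangle inequality for hitting times; both summands are $O(k^3)$ by the path hitting-time formula and the calculation above, so $H_G(u,b) = O(k^3)$. Summing over all red vertices gives $(n - 2k - 1) + 2k \cdot O(k^3) = O(n) + O(k^4) = O(n)$, using $k^4 = \Theta(n)$. Dividing by $|R| = n - 1$ yields $g(\emptyset) = O(1)$, and combining with the previous paragraph gives $f(\emptyset)/g(\emptyset) = \Omega(n^{3/4})$, matching the upper bound of Theorem~\ref{thm:boundsF} up to a constant.

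The main obstacle is the lollipop hitting-time calculation. The decisive moves are (i) noticing that pendant leaves are invisible to a walk absorbed at $b$, collapsing the computation from $G$ to $L$ and removing $\Theta(n)$ extra edges from the accounting; and (ii) splitting the commute time $C_L(v,b)$ cleanly into the easy direction $H_L(b,v) = \Theta(k^2)$ (a one-dimensional path hitting time available in closed form) and the hard direction $H_L(v,b)$, so that no explicit recurrence coupling the clique and the path is required. A minor sanity check is that $n - 2k - 1 \geq 0$ for all sufficiently large $n$, which is immediate from $k = \lfloor n^{1/4}\rfloor$.
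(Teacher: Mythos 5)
Your construction is exactly the paper's star-plus-lollipop example (pendant red leaves attached to the single blue node $b$, plus an $(n^{1/4},n^{1/4})$-lollipop hanging off it), and your two bounds $f(\emptyset)=\Omega(n^{3/4})$ and $g(\emptyset)=O(1)$ are the same as the paper's. The only difference is that you derive the $\Theta(k^3)$ lollipop hitting time from scratch via the commute-time identity and the easy reverse direction $H_L(b,v)=(k+1)^2$, whereas the paper simply cites the known lollipop result; your derivation is correct.
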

\begin{proof}
Consider the following star-path-clique configuration.
We have a star consisting of $n$ nodes, with center node $x$.
Attach to $x$ a $(n^{1/4},n^{1/4})$-lollipop graph (see \citet{Wrinkler}).
More precisely, attach to $x$ a path of $n^{1/4}$ nodes and to the other endpoint of this path we attach a clique of also $n^{1/4}$ nodes. Set $B = \{x\}$ and $R = V \setminus B$.
Clearly $f(\emptyset) \in \Theta(n^{3/4})$ since a random walk starting from a clique node needs a cubic number of expected steps with respect to the size of the lollipop to reach $x$ \cite{Wrinkler}. On the other hand,
\begin{equation*}
g(\emptyset) \in \bigO\left(\frac{n\cdot1+n^{1/4}\cdot n^{3/4}}{n}\right) = \bigO(1),
\end{equation*}
and the claim follows.
\end{proof}

Putting the pieces together, we get the following result.
\begin{cor}
\label{cor:approxgf}
An $\alpha$-approximation for \BMAH is a $(2|R|^{3/4}\alpha)$-approximation for \BMMH.
\end{cor}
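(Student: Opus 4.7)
The plan is to chain the two directions of the inequality from Theorem~\ref{thm:boundsF} with the fact that any feasible solution for \BMMH is also feasible for \BMAH (and vice versa), so their optima can be compared directly. Concretely, fix a budget $k$ and let $O_f, O_g \subseteq (R\times B)\setminus E$ denote optimal solutions of \BMMH and \BMAH, respectively, each with $|O_f|,|O_g|\le k$. Let $F$ be the output of an $\alpha$-approximation algorithm for \BMAH, so $g(F) \le \alpha\, g(O_g)$.

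The key observation is the sandwich $g(\cdot)\le f(\cdot)\le 2|R|^{3/4}g(\cdot)$ from Theorem~\ref{thm:boundsF}. The upper bound translates an average-hitting-time guarantee at $F$ into a maximum-hitting-time guarantee at $F$, while the lower bound converts an unknown optimal maximum hitting time $f(O_f)$ into a statement about $g(O_f)$ (which, since $O_f$ is also feasible for \BMAH, upper bounds $g(O_g)$). Threading these together gives
\begin{equation*}
f(F) \;\le\; 2|R|^{3/4}\,g(F) \;\le\; 2|R|^{3/4}\,\alpha\, g(O_g) \;\le\; 2|R|^{3/4}\,\alpha\, g(O_f) \;\le\; 2|R|^{3/4}\,\alpha\, f(O_f),
\end{equation*}
where the third inequality uses optimality of $O_g$ for \BMAH together with the feasibility of $O_f$ for \BMAH, and the final inequality is the lower half of Theorem~\ref{thm:boundsF} applied at $O_f$.

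There is essentially no obstacle here: the statement is a one-line consequence of Theorem~\ref{thm:boundsF} plus the trivial fact that an optimum of $g$ over the same feasible set is no larger than $g$ evaluated at any other feasible point. The only thing worth being careful about is that the approximation algorithm for \BMAH invoked here is the exact (non-bicriteria) notion of $\alpha$-approximation with the same budget $k$; when the underlying \BMAH routine is bicriteria (e.g.\ Algorithm~\ref{A:GreedyBMAH} or Algorithm~\ref{algoG+}, which use $\tilde O(k)$ edges), the same chain of inequalities carries through verbatim but yields a bicriteria $(2|R|^{3/4}\alpha)$-approximation for \BMMH using the same enlarged edge budget.
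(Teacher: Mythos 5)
Your chain of inequalities $f(F)\le 2|R|^{3/4}g(F)\le 2|R|^{3/4}\alpha\,g(O_g)\le 2|R|^{3/4}\alpha\,g(O_f)\le 2|R|^{3/4}\alpha\,f(O_f)$ is exactly the paper's proof, with the same justification for each step. Your extra remark about the bicriteria case is a sensible clarification but not needed for the statement as given.
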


\begin{proof}
Consider the output $F$ of $\alpha$-approximation algorithm for \BMAH.
Let $O_g$ (resp. $O_f$) be an optimal solution of \BMAH (resp. \BMMH).
Using Theorem~\ref{thm:boundsF} it holds that $f(F) \leq 2|R|^{3/4}g(F) \leq 2|R|^{3/4}\alpha g(O_g) \leq 2|R|^{3/4}\alpha g(O_f) \leq 2|R|^{3/4}\alpha f(O_f)$.
\end{proof}

\subsection{An asymmetric $k$-center approach\label{ss:kcenter}\label{sec:asymm}}

Hitting times in both undirected and directed graphs form a quasi-metric, in the sense that they satisfy all requirements to be a metric except symmetry, since it generally holds that $H_G(u,v) \neq H_G(v,u)$.
However, hitting times do satisfy the triangle inequality \cite{lovasz1993random}.
We will try to exploit this, by linking the \BMMH problem to the following variant of the asymmetric $k$-center problem.

\begin{prob}\label{prob:asymkcent_variant}
[Asymmetric $k$-center Problem with one fixed center] 
Given a quasi-metric space $(V,d)$ and a point $x \in V$.
Find a set of $k$ centers $F \subseteq V$, $|F| \leq k$ that minimize
\begin{align*}
\max_{v \in V} d(v, F \cup \{x\}),
\end{align*}
where for a set $A \subseteq V$ it holds $d(v,A) = \min_{a \in A} d(v,a)$.
\end{prob}
Problem~\ref{prob:asymkcent_variant} is a variant of the classic asymmetric $k$-center problem \cite{kariv1979algorithmic}. 
It is not hard to see that Problem~\ref{prob:asymkcent_variant} is as hard to approximate as the classic problem. Additionally, existing algorithms \cite{archer2001two,panigrahy1998ano} for the classic problem can easily be modified to solve Problem~\ref{prob:asymkcent_variant}.

We show a relationship between \BMMH and Problem~\ref{prob:asymkcent_variant}.
Let $\mathcal{I}_1$ be an instance to \BMMH.
Lemma~\ref{lem:asmkc} states that the optimum of Problem~\ref{prob:asymkcent_variant} on a well-chosen quasi-metric space $\mathcal{I}_2$ (which depends on the instance $\mathcal{I}_1$) is a lower bound for the optimum of \BMMH on instance $\mathcal{I}_1$.
We define this quasi-metric space in Definition~\ref{def:modhtqm}.

\begin{defi}[modified hitting time quasi-metric\label{def:modhtqm}]
Let $G = (R \cup B, E)$ be an input graph to \BMMH.
Associate with $B$ a single new point $b$, and 
consider function $d: (R \cup \{b\}) \times (R \cup \{b\}) \to \mathbb{R}_{\geq 0}$
defined as
\begin{equation}
  d(u,v) \triangleq
    \begin{cases}
      H_G(u,v) & \text{if $u,v \in R$},\\
      H_G(u,B) & \text{if $u \in R$ and $v=b$},\\
      \max_{i \in B} H_G(i,v) & \text{if $u=b$ and $v\in R$},\\
      0 & \text{if $u=v=b$}.
    \end{cases}       
\end{equation}
\end{defi}

\begin{proposition}
$(R \cup \{b\}, d)$ is a quasi-metric space.
\end{proposition}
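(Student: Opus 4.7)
My plan is to verify the two defining axioms of a quasi-metric space directly: (i) non-negativity together with the identity property $d(u,v)=0 \iff u=v$, and (ii) the triangle inequality $d(u,w) \le d(u,v) + d(v,w)$ for all $u,v,w \in R \cup \{b\}$. Symmetry is explicitly not required. Non-negativity is immediate from the definition, since hitting times are non-negative. For the identity property, I would observe that for $u,v \in R$ we have $H_G(u,v) = 0 \iff u=v$; that $d(u,b) = H_G(u,B) > 0$ whenever $u \in R$ because $R \cap B = \emptyset$; and symmetrically $d(b,v) = \max_{i \in B} H_G(i,v) > 0$ for $v \in R$. Combined with $d(b,b)=0$ by definition, this settles (i).

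For the triangle inequality, I would do a case analysis on how many of $u,v,w$ equal $b$. When all three lie in $R$, the inequality reduces to the standard triangle inequality for hitting times in $G$, which is a classical consequence of the strong Markov property (cited already in the excerpt via Lov\'asz). The cases in which $b$ appears at the endpoints but not in the middle are essentially the same, with one term being a hitting time to the set $B$ or from $B$, which is bounded by the hitting time to/from a specific member of $B$; for instance, when $w = b$ and $u,v \in R$, we need $H_G(u,B) \le H_G(u,v) + H_G(v,B)$, and this follows because a walk from $u$ can first reach $v$ and then $B$, and $H_G(u,B) \le H_G(u,i)$ for every $i \in B$ (the strong-Markov decomposition actually gives the cleaner bound directly).

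The genuinely delicate case, which I expect to be the main obstacle, is when $v = b$ sits in the middle with $u,w \in R$: here I must show
\begin{equation*}
H_G(u,w) \;\le\; H_G(u,B) + \max_{i \in B} H_G(i,w).
\end{equation*}
My plan is to argue pathwise via the strong Markov property. Let $I \in B$ be the (random) first hit of $B$ by a walk starting at $u$. Then $\tau_u(w) \le \tau_u(B) + \tau^{(1)}_I(w)$ holds pathwise, where $\tau^{(1)}_I(w)$ denotes the additional time to reach $w$ after time $\tau_u(B)$; by strong Markov this has the same distribution as an independent hitting time from $I$. Taking expectations and then maximising over $i \in B$ gives the desired bound. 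The symmetric case $u=b$, $w \in R$, $v \in R$ also reduces to the hitting-time triangle inequality applied to each $i \in B$ and then taking the maximum.

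The remaining cases in which two or three of the points equal $b$ are trivial: either both sides are zero, or one side is exactly $d(u,w)$ plus a non-negative term. Collecting all cases yields the triangle inequality on $R \cup \{b\}$, completing the proof. The only nontrivial technical ingredient is the strong Markov argument in the middle-$b$ case; everything else is bookkeeping.
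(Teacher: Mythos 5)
Your proposal is correct and follows essentially the same route as the paper: a case analysis on the position of $b$ in the triangle inequality, with the standard hitting-time triangle inequality for the all-red and middle-red cases, and the strong Markov property at the first visit to $B$ for the middle-$b$ case (the paper invokes it in one line; you spell out the pathwise decomposition). Your additional verification of non-negativity and the identity axiom, which the paper only asserts, is fine and adds nothing problematic.
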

\begin{proof}
We prove the triangle inequality $d(x,y) \leq d(x,z) + d(z,y)$, as the identity axiom is satisfied.
The triangle inequality holds when $b \neq x,y,z$, because of standard triangle inequality of hitting times.
So consider the case where $b$ is present in the inequality. 
There are three cases involving $u,v \in R$ and $b$:
\begin{itemize}
\item $d(u,v) \leq d(u,b) + d(b,v)$ or equivalently $H_G(u,v) \leq H_G(u,B) + \max_{i \in B} H_G(i,v)$.
This holds because of the strong Markov property when visiting the set $B$.
\item $d(u,b) \leq d(u,v) + d(v,b)$ or equivalently $H_G(u,B) \leq H_G(u,v) + H_G(v,B)$.
This holds because of triangle inequality of hitting times.
\item $d(b,u) \leq d(b,v) + d(v,u)$ or equivalently
$\max_{i \in B} H_G(i,u) \leq \max_{j \in B} H_G(j,v) + H_G(v,u)$.
Let $i^* = \argmax_{i \in B} H_G(i,u)$. 
This holds because $H_G(i^*,u) \leq H_G(i^*,v) + H_G(v,u) \leq \max_{j \in B} H_G(j,v) + H_G(v,u)$.
\end{itemize}
\end{proof}

\begin{lemma}
\label{lem:asmkc}
Let $M^*$ be the optimum of \BMMH for an input graph~$G$.
Let $C^*$ be the optimum of Problem~\ref{prob:asymkcent_variant} 
on the quasi-metric space defined in Definition~\ref{def:modhtqm} with $x=b$.
Then, it holds that
\begin{align*}
C^* \leq M^*.
\end{align*}
\end{lemma}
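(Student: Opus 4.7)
The plan is to exhibit an explicit set of at most $k$ centers for the asymmetric $k$-center instance (with fixed center $b$) whose cost does not exceed $M^*$, which immediately gives $C^* \leq M^*$. Let $F^*$ be an optimal \BMMH solution with $f(F^*) = M^*$ and $|F^*| \leq k$. By Observation~\ref{obs:endpoints}, only the red endpoints of $F^*$ matter, so let $R^* \subseteq R$ denote its set of distinct red endpoints, with $|R^*| \leq k$.

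The first step is a coupling between random walks in $G$ and in $G+F^*$ starting from any red node $v$: since no shortcut edge of $F^*$ is incident to a node in $R \setminus R^*$, the transition probabilities coincide on such nodes, and we can couple the two walks to agree exactly until their first visit to the set $R^* \cup B$. Thus the first-hit time of $R^* \cup B$ has the same law in both graphs, and any walk in $G+F^*$ that reaches $B$ must first reach $R^* \cup B$, so
\begin{equation*}
H_G(v, R^* \cup B) \;\leq\; H_{G+F^*}(v, B) \;\leq\; M^*.
\end{equation*}

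The natural candidate center set is $C := R^*$. For every $v$ with $H_G(v,B) \leq M^*$, the fixed center $b$ already covers $v$, since $d(v,b) = H_G(v,B) \leq M^*$. The remaining task is to show that for every $v$ with $H_G(v,B) > M^*$ there is some $r \in R^*$ with $H_G(v,r) \leq M^*$. Here I would use the strong Markov property at the random first-hit time of $R^* \cup B$: decompose $H_G(v,r)$ according to which element of $R^* \cup B$ is hit first, and use the bound $H_{G+F^*}(r,B) \leq M^*$ at each $r \in R^*$ to account for the walk's behaviour after the first hit. Intuitively, when $H_G(v,B)$ is large, the walk in $G+F^*$ must route to $B$ through some shortcut, which pins down a particular $r \in R^*$ through which $v$ can effectively reach the blue set.

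The main obstacle is precisely this last conversion, because the quasi-metric uses unconditional point hitting times $H_G(v,r)$, which can be strictly larger than the set hitting time $H_G(v, R^* \cup B)$: a walk may squander many steps bouncing between blue and red neighbours before actually reaching a specific target $r$. Closing this gap is the technical heart of the argument, and it may force us to augment or refine $R^*$---for example, by replacing each $r \in R^*$ with a nearby ``proxy'' on the natural flow from $r$ toward $B$, or by more delicate coupling and strong-Markov decomposition arguments---while keeping $|C| \leq k$. Once the individual point-hitting bound is established for every $v \in R$, we conclude $\max_{v \in R \cup \{b\}} d(v, C \cup \{b\}) \leq M^*$, and therefore $C^* \leq M^*$.
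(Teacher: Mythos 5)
Your setup coincides with the paper's: the candidate centers are the red endpoints of an optimal \BMMH solution (your $R^*$ is the paper's $C$), the fixed center $b$ covers every $v$ with $H_G(v,B)\leq M^*$, and the crux is to exhibit, for each $v$ with $H_G(v,B)>M^*$, a \emph{specific} endpoint $i^*\in R^*$ with $H_G(v,i^*)\leq M^*$. You correctly observe that your coupling only yields the set-hitting bound $H_G(v,R^*\cup B)\leq M^*$, which is on the wrong side of the needed quantity (recall $H_G(v,A)\leq\min_{a\in A}H_G(v,a)$, while $d(v,C)=\min_{c\in C}H_G(v,c)$), and you explicitly leave this conversion open, even suggesting the center set might have to be modified. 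So the proposal is incomplete precisely at the technical heart of the lemma.

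The missing idea is a conditional decomposition in $G^*=G+F^*$ rather than a coupling. Since every shortcut edge lands in $B$ and the walk halts upon hitting $B$, an absorbed walk traverses at most one shortcut; hence the events ``the walk uses a new edge incident to $i$'' (for $i$ ranging over the reachable endpoints $I\subseteq R^*$) together with ``the walk uses no new edge'' partition the sample space. Applying the law of total expectation to $H_{G^*}(v,B)\leq M^*$ over this partition (this is Eq.~(\ref{eq:ht}) in the paper), the no-shortcut branch contributes conditional expectation exactly $H_G(v,B)>M^*$, so some branch indexed by $i^*\in I$ must satisfy $H_{G^*}(v,B\mid\text{walk uses a new edge incident to }i^*)\leq M^*$. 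A walk in that branch necessarily visits $i^*$ before absorption, which gives $H_G(v,i^*)\leq H_{G^*}(v,B\mid\text{walk uses a new edge incident to }i^*)\leq M^*$, and therefore $d(v,C\cup\{b\})\leq M^*$. No augmentation or ``proxying'' of $R^*$ is needed; the original endpoint set already works once the bound is extracted per branch rather than for the set as a whole.
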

\begin{proof}
Consider the red endpoints $C = \{c_1,\ldots,c_k\} \subseteq R$ of a set of edges $F$ to \BMMH that achieves the optimum value $M^*$.
Note that some endpoints might be the same, so $C$ is a multiset.
Let $G^* = G+F$ be the augmentation of $G$ with $F$.

Create a feasible solution to Problem~\ref{prob:asymkcent_variant} by taking $C$ as the set of chosen centers. We claim that for every $v \in R \cup \{b\}$ it holds that $d(v, C \cup \{b\}) \leq M^*$.
We have three cases.
If $v = b$, the claim is trivially satisfied.
If $v \neq b$ and $H_G(v,B) \leq M^*$, the claim also holds, as from Definition~\ref{def:modhtqm} we have $d(v, C \cup \{b\}) \leq d(v, b) = H_G(v,B)\leq M^*$.
So the remaining case is when $v \neq b$ and $H_G(v,B) > M^*$.
Since $H_{G^*}(v,B) \leq M^*$ (by definition of $M^*$) there must exist a non-empty set of endpoints $I \subseteq C$, such that for all $i \in I$:
\begin{align*}
\text{Pr}[\text{walk starting from $v$ uses a new edge incident to } i]>0,
\end{align*}
when walking on the graph $G^*$.
Let $I$ be the largest such set.
Using the law of total expectation, decompose\footnote{In Eq.~(\ref{eq:ht}) we have assumed that $\text{Pr}[\text{walk uses no new edges}]>0$. The analysis when this probability is zero is easier and follows the same reasoning.} $H_{G^*}(v,B)$ as
\begin{align}\label{eq:ht}
H_{G^*}(v,B) =~ & H_{G^*}(v,B | \text{ walk uses no new edges}) \\
&\quad\cdot \text{Pr}[\text{walk uses no new edges}]\notag \notag\\
& + \sum_{i \in I} H_{G^*}(v,B | \text{ walk uses a new edge incident to } i) \notag\\
&\quad\cdot \text{Pr}[\text{walk uses a new edge incident to } i] \leq M^*.\notag
\end{align}

By our case assumption $H_{G}(v,B)> M^*$, it holds that
\begin{align*}
H_{G^*}(v,B | \text{ walk uses no new edges})=H_{G}(v,B)> M^*.
\end{align*}

Thus the only way Eq.~(\ref{eq:ht}) can hold is when
\begin{align*}
\min_{i \in I} H_{G^*}(v,B \mid \text{walk uses a new edge incident to } i) \leq M^*. 
\end{align*}

Let $i^*$ be a center $i \in I$ that attains the minimum.
The result follows since
\begin{align*}
d(v, & C \cup \{b\}) \leq d(v, C) = \min_{c \in C} H_G(v, c) \leq H_G(v, i^*)\\
&\leq H_{G^*}(v,B \mid \text{walk uses a new edge incident to } i^*) \leq M^*.
\end{align*}
\end{proof}

\begin{algorithm}[t]
\caption{\label{A:BMMHasym}[\texttt{ASyMM}] $\bigO\left(\beta\, d_{\text{m}}\right)$-approximation for \BMMH{}.}
\begin{algorithmic}[1]
\Require $G=(R \cup B, E)$, parameter $k \geq 1$.
\State Run any $\beta$-approximation algorithm \cite{archer2001two,panigrahy1998ano} for asymmetric $k$-center variant (Problem~\ref{prob:asymkcent_variant}) on the quasi-metric space defined in Definition~\ref{def:modhtqm}, with fixed center $x=b$.
\State Let $C = \{c_1,\ldots,c_k\} \subseteq R$ be the output of step 1. Return a set $F$ of at most $k$ shortcut edges by connecting every $c_i$ to some node in $B$.
\end{algorithmic}
\end{algorithm}

Lemma~\ref{lem:asmkc} allows us to analyze Algorithm~\ref{A:BMMHasym}.
We write $d_{\text{m}} = \max_{c \in C} d_G(c)$ 
for the maximum degree of the nodes in $C$, and $\beta$ 
for the approximation guarantee of an asymmetric $k$-center algorithm.
The state-of-the-art approximation for asymmetric $k$-center is
$\beta = \bigO(\log^*(k))$ by two different algorithms of \citet{archer2001two}.

\begin{theorem}
Algorithm~\ref{A:BMMHasym} is an $\bigO\left(\beta\, d_{\text{m}}\right)$-approximation for \BMMH{}.
\end{theorem}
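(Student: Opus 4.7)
My plan is to show that for every red node $r$, the expected hitting time $H_{G+F}(r,B)$ is at most $\bigO(\beta \, d_{\text{m}} \, M^*)$, which yields the theorem since $f(F) = \max_{r \in R} H_{G+F}(r,B)$. The key input is Lemma~\ref{lem:asmkc} combined with the $\beta$-approximation guarantee of the asymmetric $k$-center subroutine: the set of centers $C$ returned in Step~1 satisfies $d(v, C \cup \{b\}) \le \beta \, C^* \le \beta M^*$ for every $v \in R$. Unpacking the definition of $d$, this means that every red node $v$ has either $H_G(v,B) \le \beta M^*$ or $H_G(v, c_j) \le \beta M^*$ for some $c_j \in C$. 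In either case, by Observation~\ref{obs:mono}, $H_{G+F}(v, B \cup C) \le H_G(v, B \cup C) \le \beta M^*$.

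Next I would analyze the walk on $G+F$ started at $r$ via an \emph{epoch decomposition}. An epoch starts at a red node $w$ (initially $w = r$) and terminates at the first visit to $B \cup C$ in $G+F$. By the previous paragraph, every epoch has expected duration at most $\beta M^*$, regardless of which red node begins it. If an epoch ends in $B$, the walk is finished. If it ends at some center $c_i$, the walker takes one more step: with probability $p_i = |F_i|/d_{G+F}(c_i)$ it traverses one of the $|F_i|$ new edges at $c_i$ and lands immediately in $B$, and with probability $1 - p_i$ it steps to an old neighbor of $c_i$, starting a new epoch. Since $d_G(c_i) \le d_{\text{m}}$ and $|F_i| \ge 1$, one checks $p_i = |F_i|/(d_G(c_i) + |F_i|) \ge 1/(d_{\text{m}}+1)$ uniformly over $i$.

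Let $N$ be the number of epochs until termination. By the uniform lower bound on success probability, $N$ is stochastically dominated by a geometric random variable with parameter $1/(d_{\text{m}}+1)$, so $E[N] \le d_{\text{m}}+1$. Writing $D_1, D_2, \ldots$ for the successive epoch durations, the bound $E[D_i \mid \mathcal{F}_{i-1}] \le \beta M^*$ holds by the first paragraph because the starting node of epoch $i$ is red (conditional on $i \le N$), and $\{i \le N\}$ is $\mathcal{F}_{i-1}$-measurable since $N$ is a stopping time with respect to the epoch filtration. A standard Wald-type bookkeeping then gives
\begin{equation*}
H_{G+F}(r,B) \;\le\; E\!\left[\sum_{i=1}^{N} D_i\right] + 1 \;\le\; \beta M^* \, E[N] + 1 \;=\; \bigO(\beta \, d_{\text{m}} \, M^*),
\end{equation*}
which is the desired approximation ratio.

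The main obstacle is the epoch argument. Two delicate points must be handled carefully: (i) the conditional expectation bound $E[D_i \mid \mathcal{F}_{i-1}] \le \beta M^*$ relies on the $k$-center guarantee applying at \emph{every} red node that may be reached mid-walk (not only at $r$), together with monotonicity of hitting times to the \emph{set} $B \cup C$ when passing from $G$ to $G+F$; and (ii) the $D_i$ are not independent, so the bound $E[\sum_{i=1}^{N} D_i] \le \beta M^* \, E[N]$ must be justified via the decomposition $\sum_{i \ge 1} E[D_i \mathbf{1}\{i \le N\}] = \sum_{i \ge 1} E[\mathbf{1}\{i \le N\} \, E[D_i \mid \mathcal{F}_{i-1}]]$ rather than by direct application of Wald's identity.
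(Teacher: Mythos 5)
Your proposal is correct and follows essentially the same route as the paper's proof: use Lemma~\ref{lem:asmkc} plus the $\beta$-approximation to bound the expected time to reach $C\cup B$ by $\beta M^*$, note the $\geq 1/(d_{\text{m}}+1)$ chance of crossing to $B$ at each center, and sum over a geometrically distributed number of rounds. Your epoch/Wald formalization is a more careful version of the paper's arithmetico-geometric series computation, and your explicit handling of the case where $d(v,C\cup\{b\})$ is attained by $b$ itself is a small point the paper glosses over.
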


\begin{proof}
By using a $\beta$-approximation for Problem~\ref{prob:asymkcent_variant} and using Lemma~\ref{lem:asmkc} we get that for all $r \in R$ it is $H_G(r,C) \leq \min_{c \in C} H_G(r,c) \allowbreak \leq \beta C^* \leq \beta M^*$.
Consider a random walk that starts from any $r \in R$. In expectation we need at most $\beta M^*$ steps to encounter a node $c \in C$. We know that every $c \in C$ is incident to \emph{some} edge connecting it to $B$. This is either a new shortcut edge $e \in F$ from Algorithm~\ref{A:BMMHasym}, or if we cannot add a new shortcut edge incident to $c$, this implies that $c$ is already connected to all the blue nodes. In either case, when the walk is at $c$, we have a chance of at least $\frac{1}{d_{\text{m}}+1}$ to hit the blue nodes in the next step. On the other hand, if the walk does not follow such an edge, the walk needs at most another $\beta M^*$ steps in expectation to reach $C$ again, and the process repeats. Of course, the walk might have encountered a blue node along the way, but this is only beneficial. So we conclude that for all $r \in R$:
\begin{align*}
H_{G + F}(r,B)  & \leq  (\beta M^*+1)\frac{1}{d_{\text{m}}+1} + (2 \beta M^*+1)\frac{d_{\text{m}}}{d_{\text{m}+1}}\frac{1}{d_{\text{m}}+1} + \\
&(3 \beta M^*+1)\left(\frac{d_{\text{m}}}{d_{\text{m}+1}}\right)^2 \frac{1}{d_{\text{m}}+1} + \ldots = 1+(d_{\text{m}}+1)\beta M^*.
\end{align*}
The last step follows by splitting the sum as a geometric series and an arithmetico-geometric series, which states that $\sum_{i=1}^{\infty}ir^i = \frac{r}{(1-r)^2}$ for $0<r<1$.
\end{proof}


\begin{figure*}[t]
\centering
\subfloat{\scalebox{0.23}{\includegraphics[trim={0 0 1.53cm 1cm},clip]{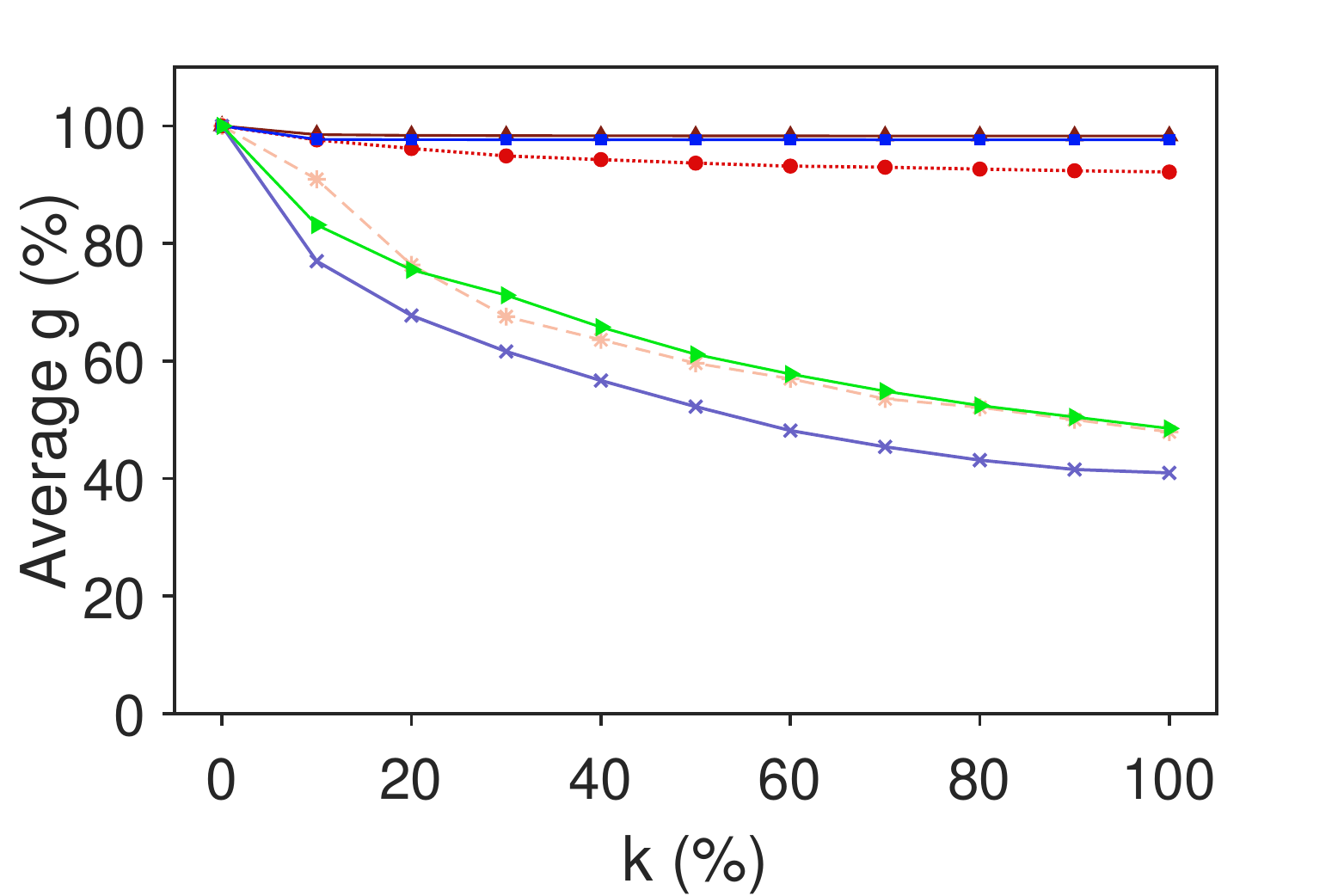}}}
\subfloat{\scalebox{0.23}{\includegraphics[trim={0 0 1.53cm 1cm},clip]{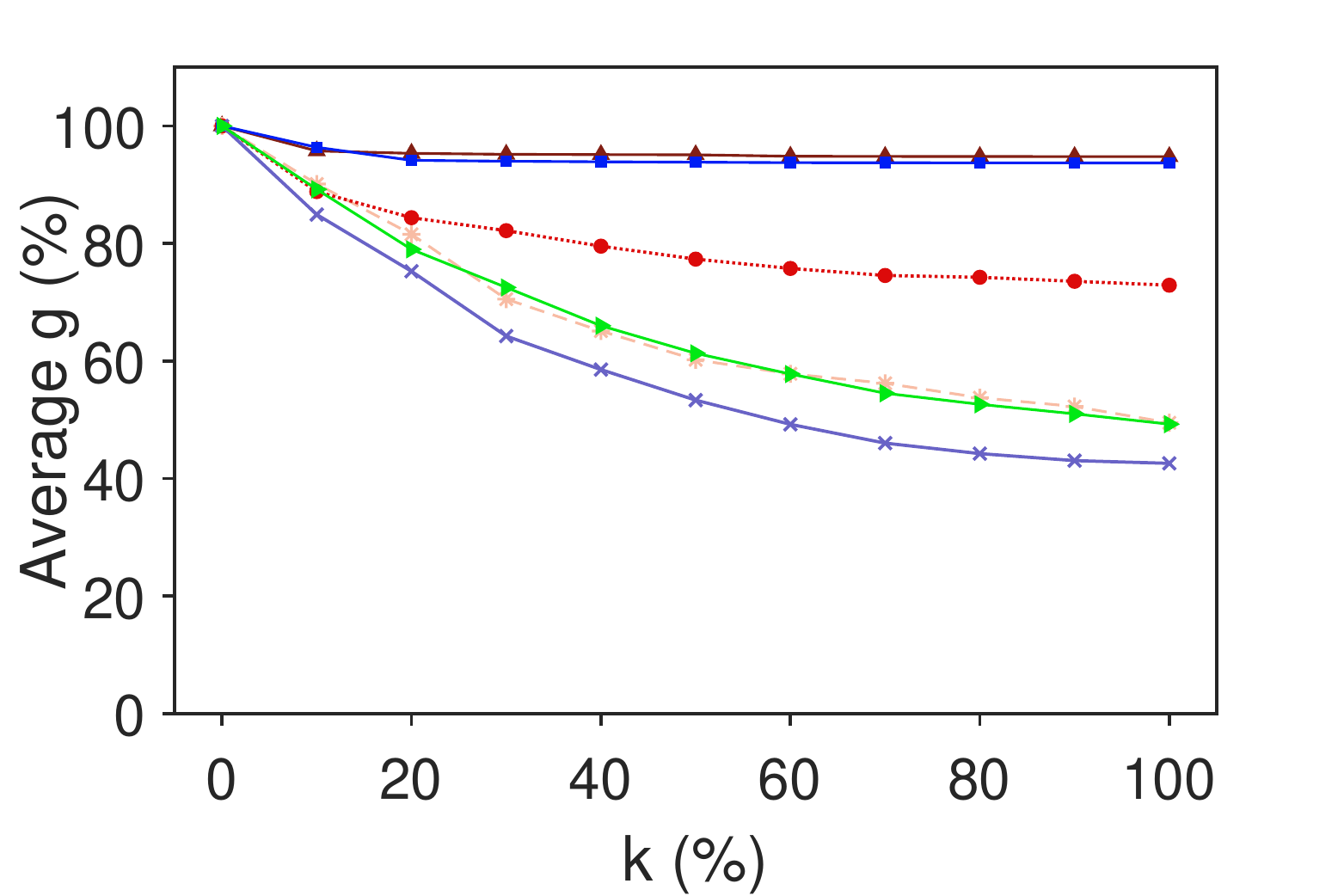}}}
\subfloat{\scalebox{0.23}{\includegraphics[trim={0 0 1.53cm 1cm},clip]{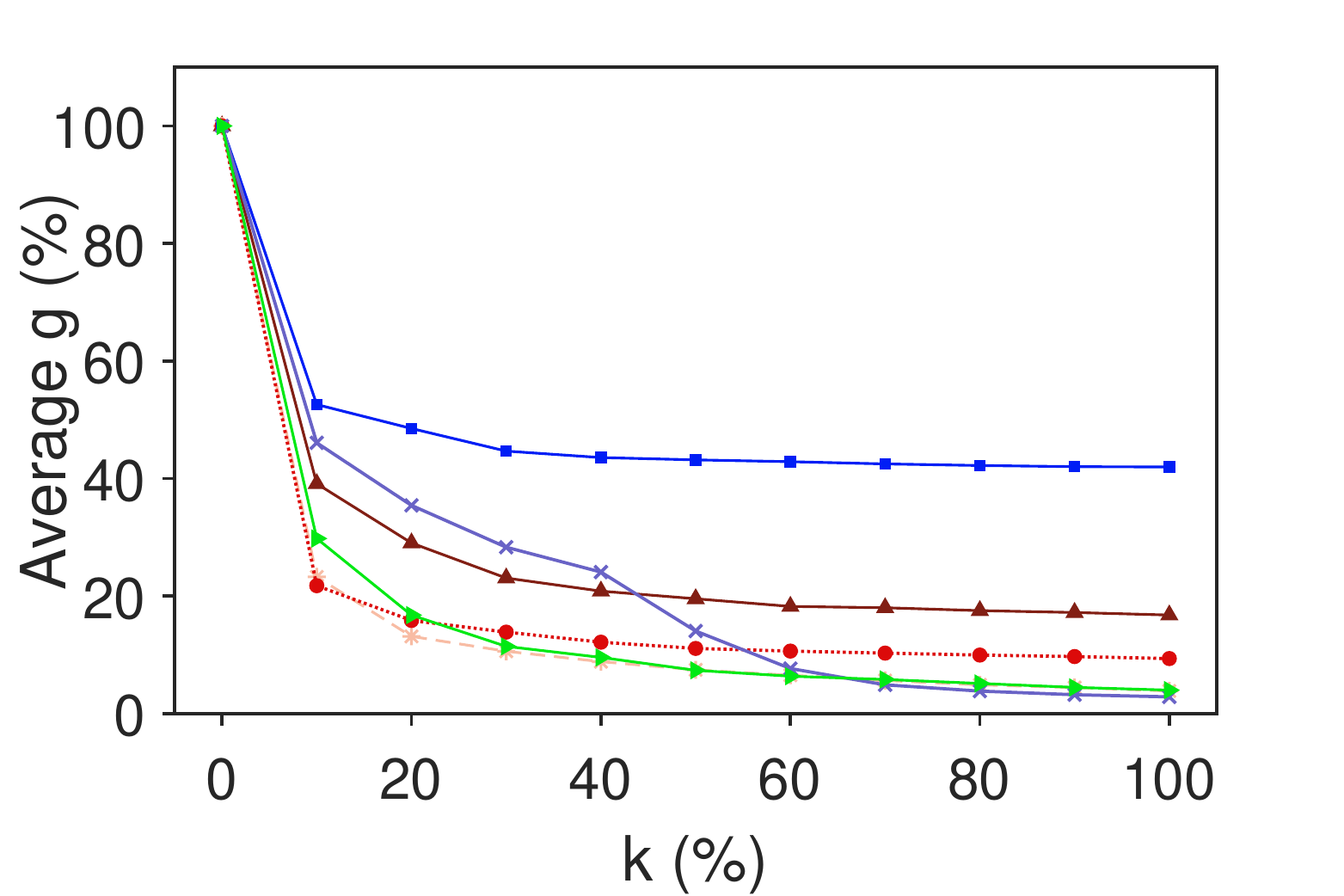}}}
\subfloat{\scalebox{0.23}{\includegraphics[trim={0 0 1.57cm 1cm},clip]{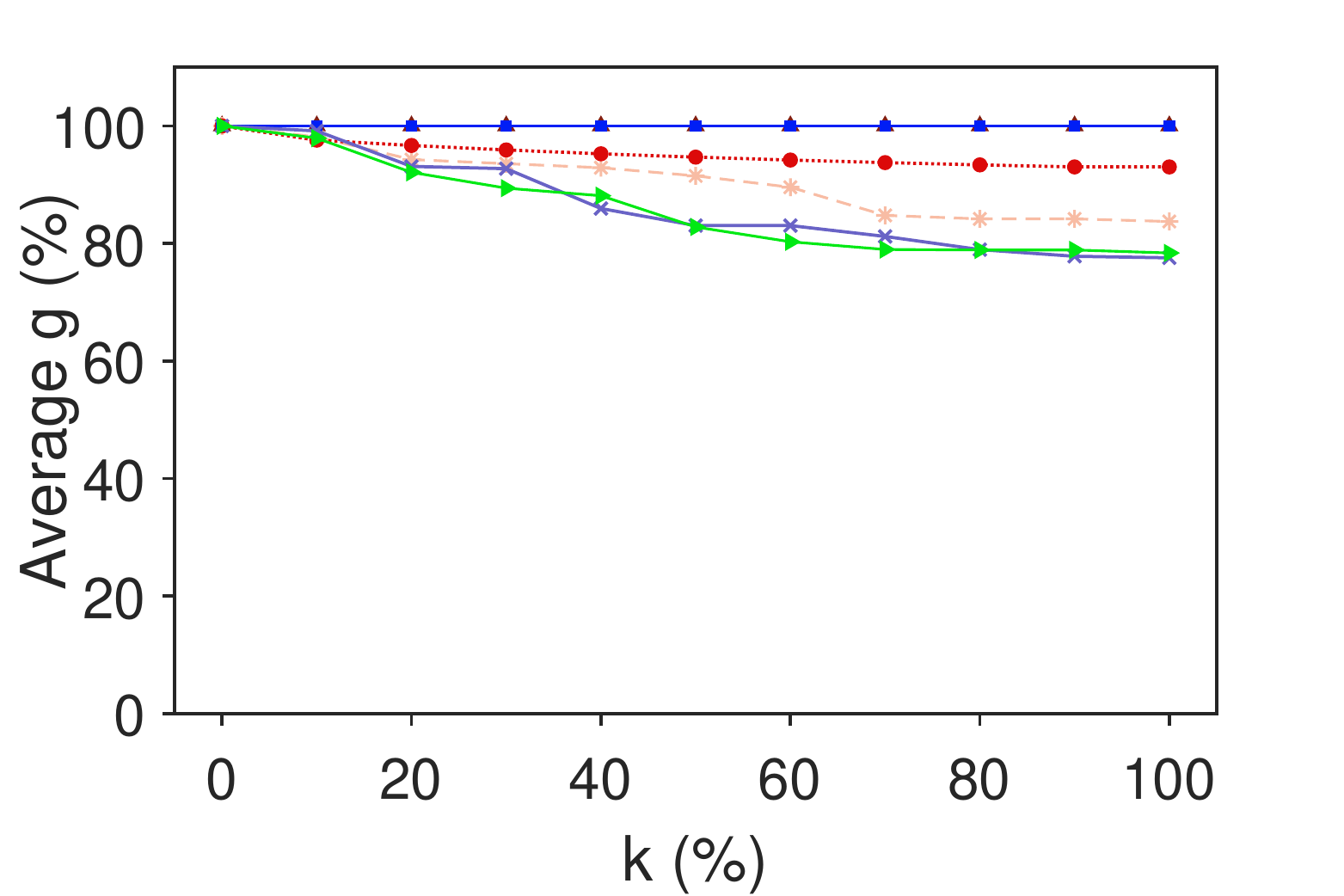}}}
\subfloat{\scalebox{0.23}{\includegraphics[trim={0 0 1.57cm 1cm},clip]{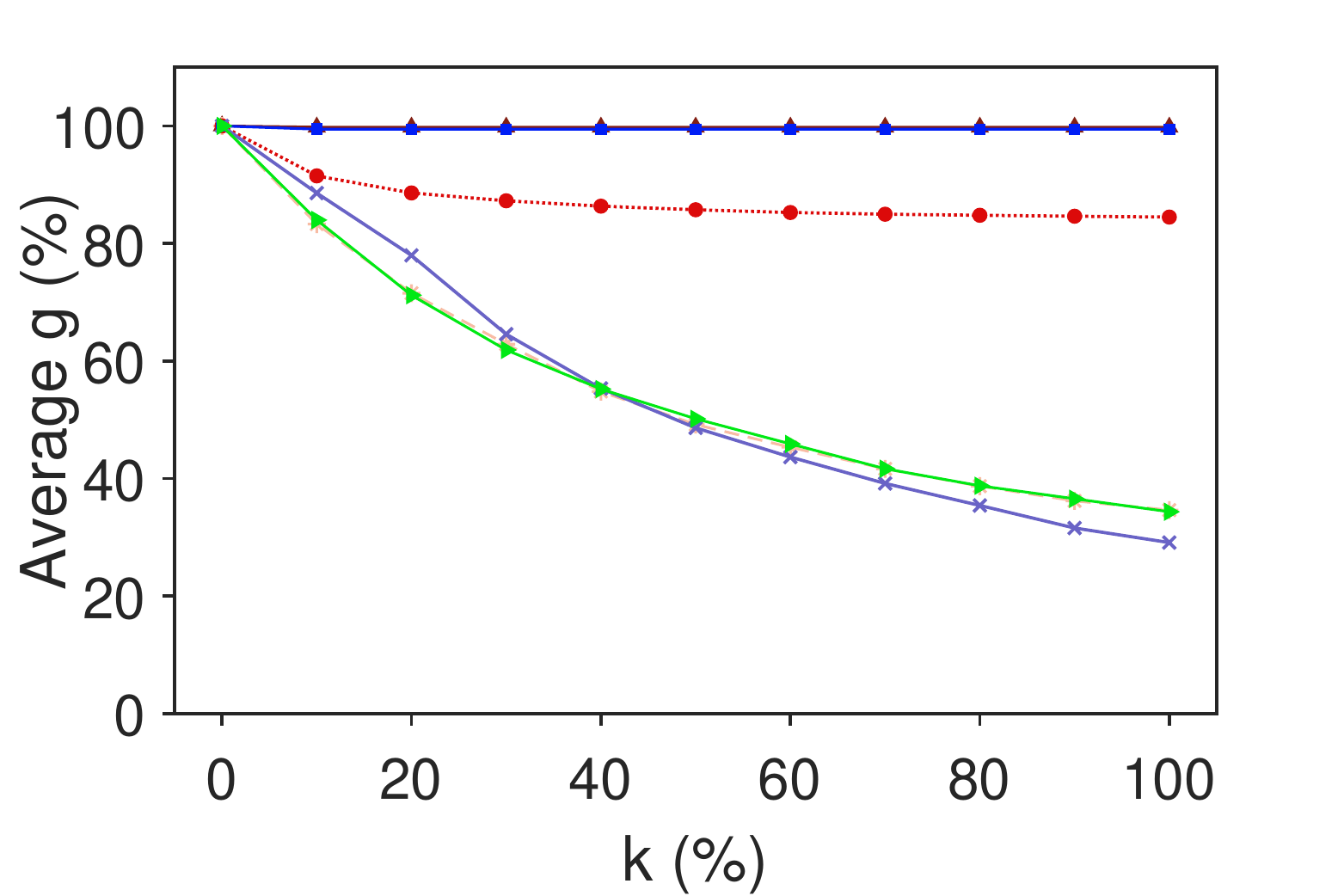}}}
\\
\renewcommand{\thesubfigure}{a}
\subfloat[Wiki Abort.]{\scalebox{0.23}{\includegraphics[trim={0 0 1.53cm 1cm},clip]{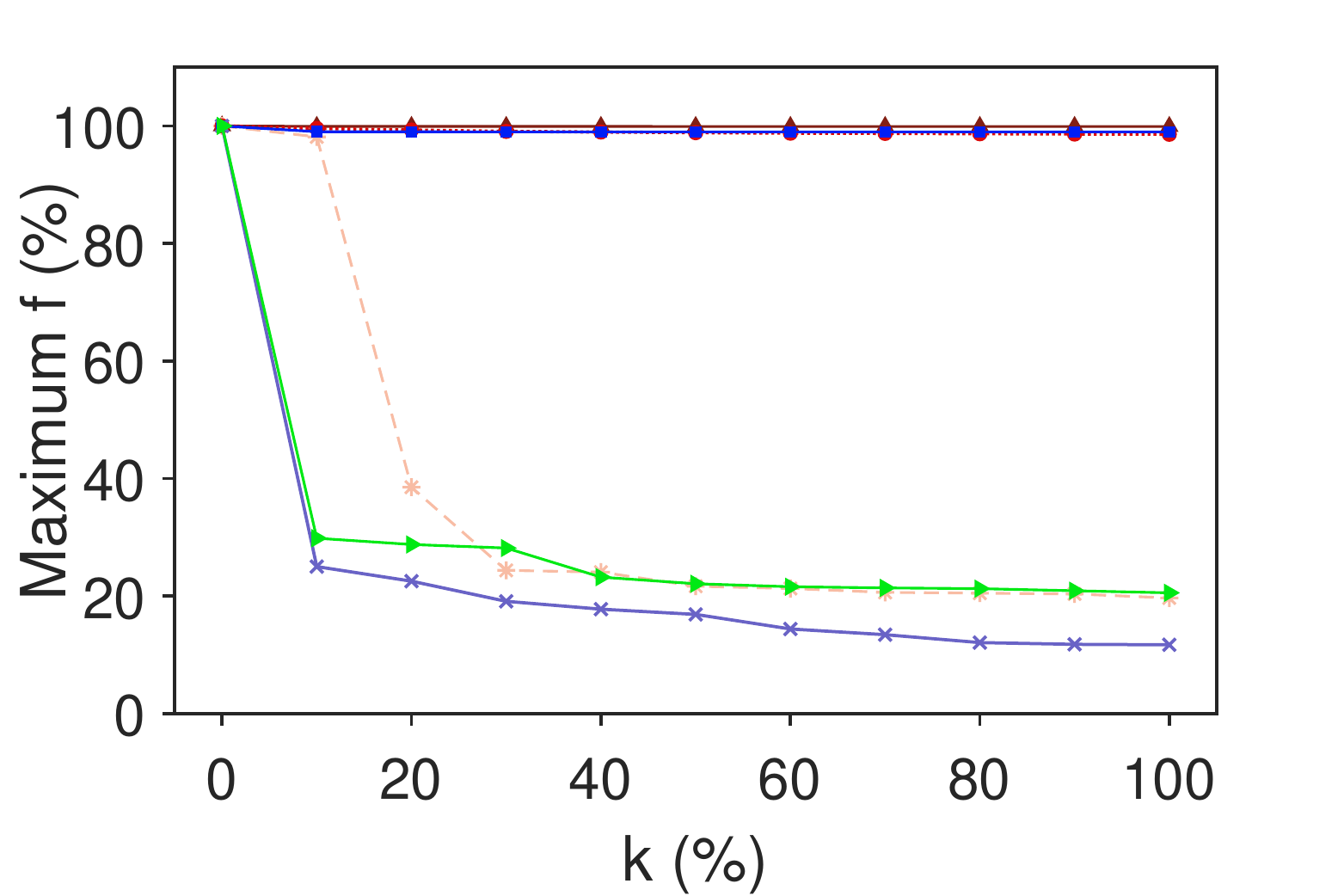}}}
\renewcommand{\thesubfigure}{b}
\subfloat[Wiki Guns]{\scalebox{0.23}{\includegraphics[trim={0 0 1.53cm 1cm},clip]{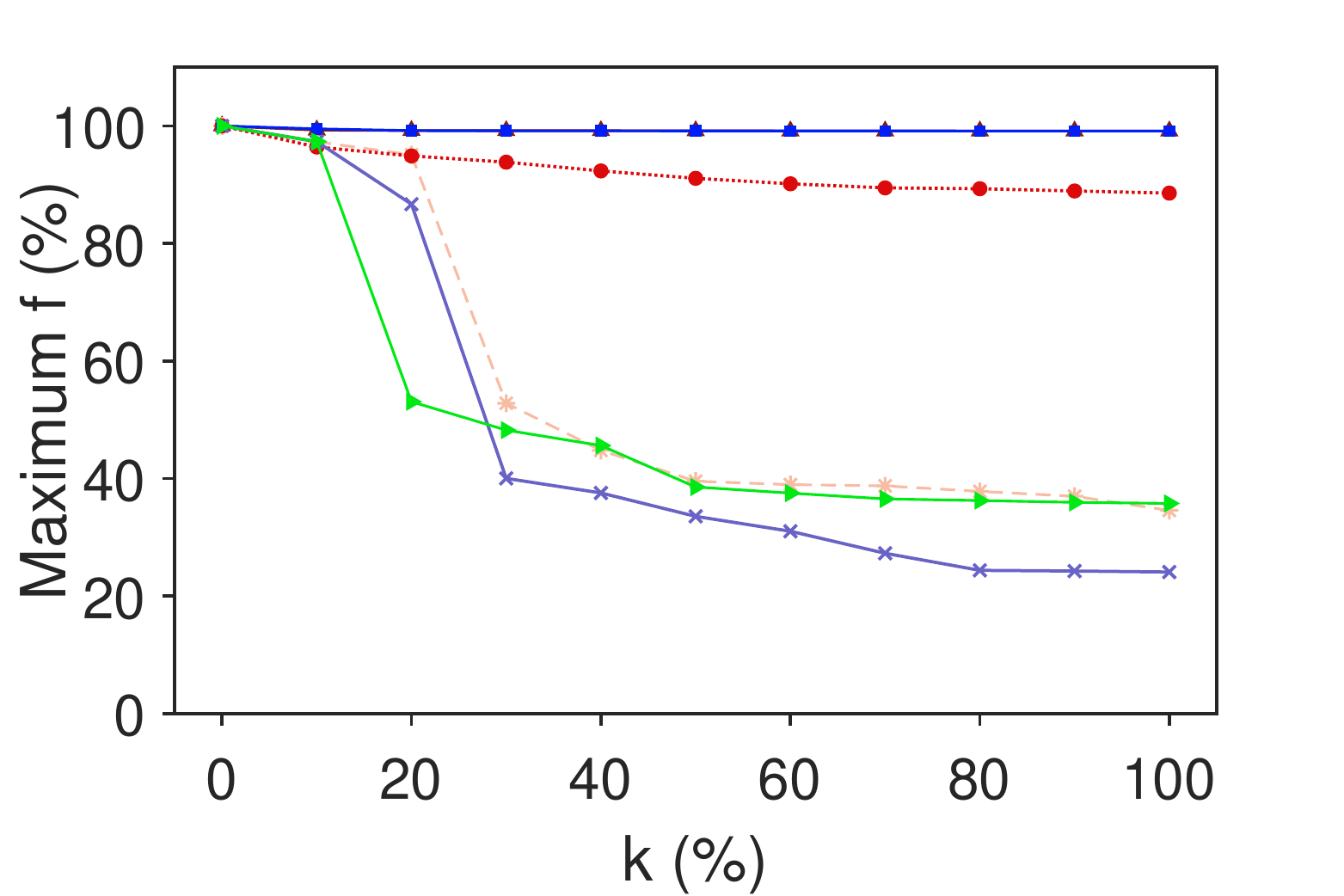}}}
\renewcommand{\thesubfigure}{c}
\subfloat[Amazon Mate]{\scalebox{0.23}{\includegraphics[trim={0 0 1.53cm 1cm},clip]{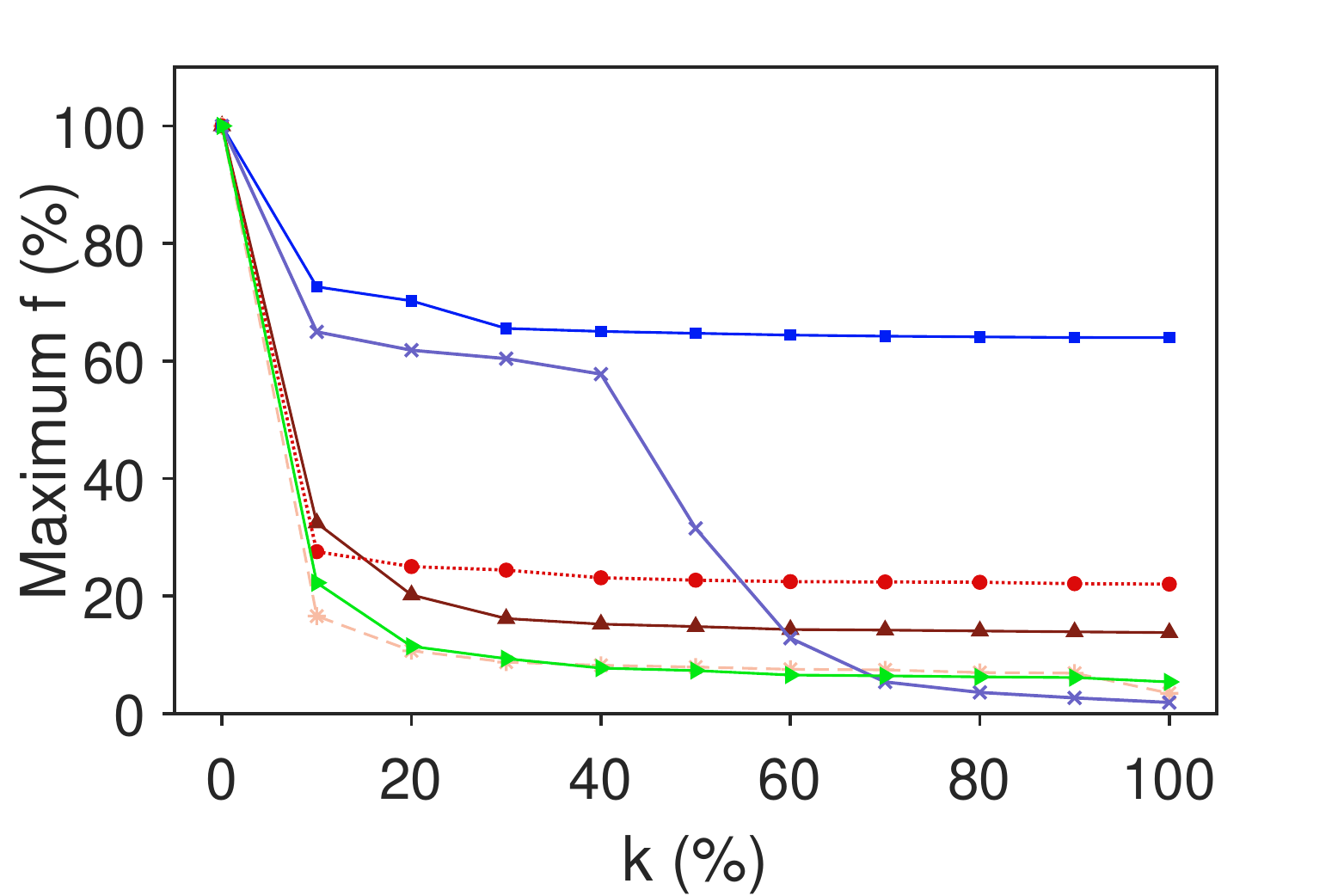}}}
\renewcommand{\thesubfigure}{d}
\subfloat[Amazon MiHi]{\scalebox{0.23}{\includegraphics[trim={0 0 1.57cm 1cm},clip]{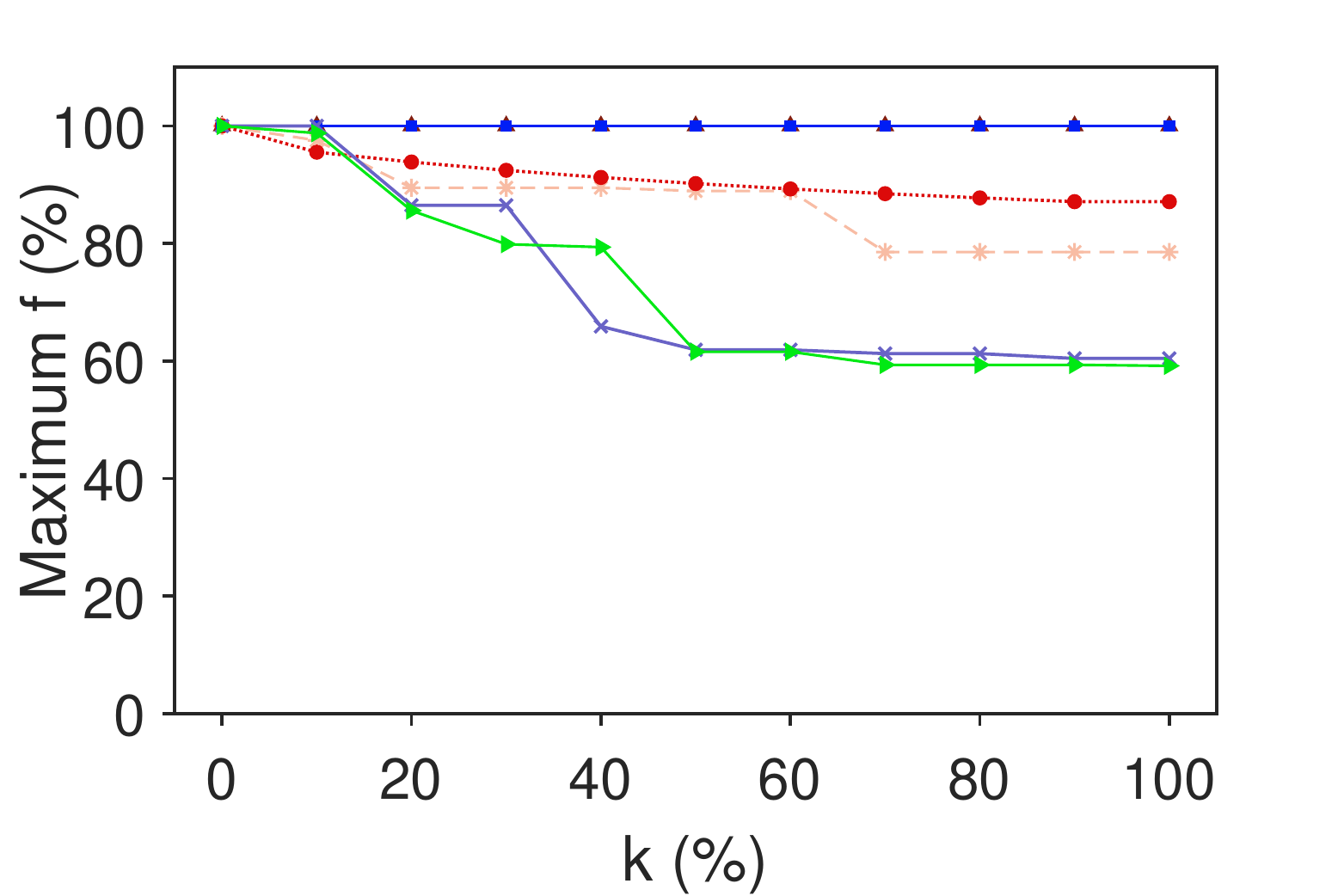}}}
\subfloat[Wiki Sociol.]{\scalebox{0.23}{\includegraphics[trim={0 0 1.57cm 1cm},clip]{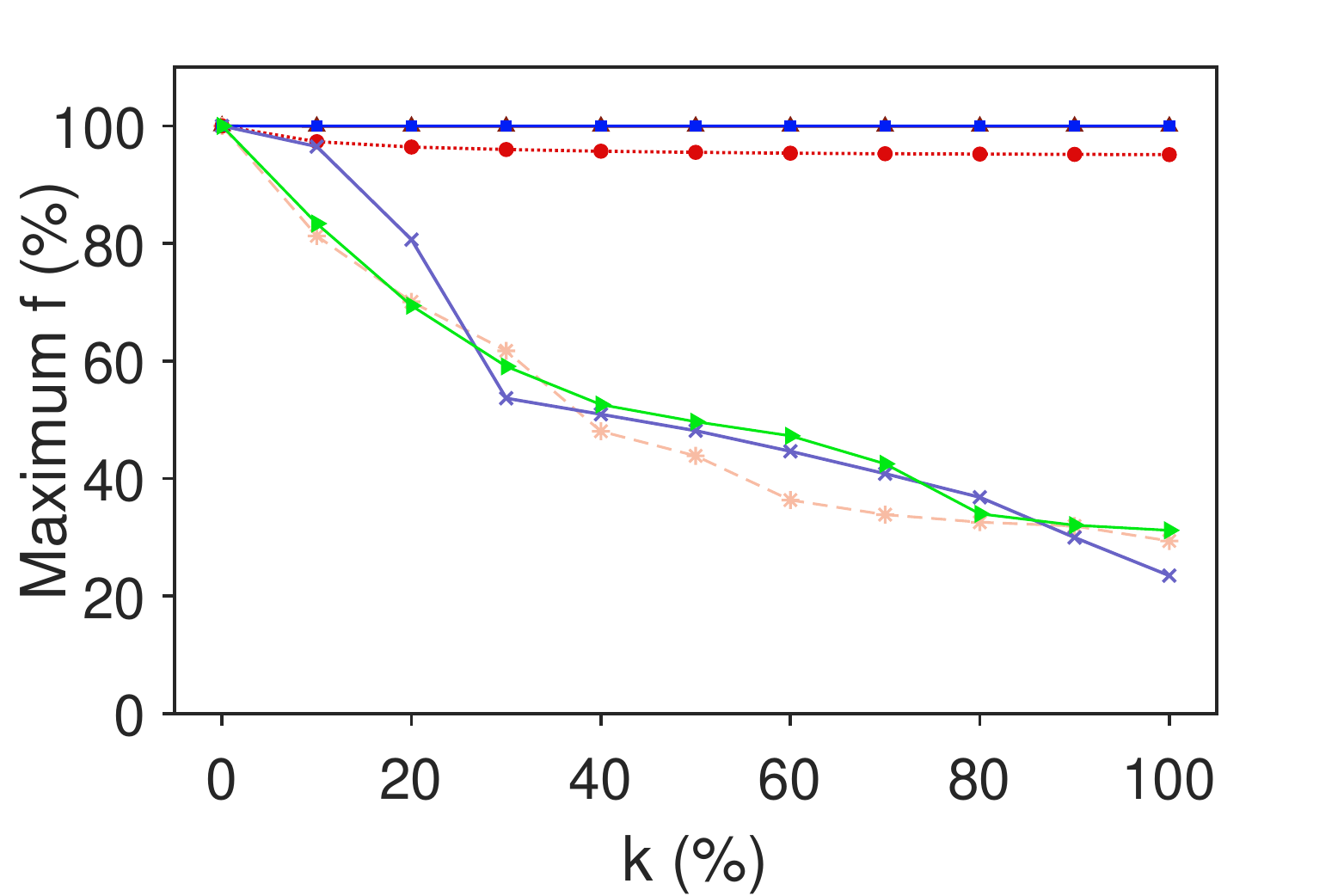}}}
\\
\subfloat{\scalebox{0.5}{\includegraphics[trim={4.2cm 1.3cm 6.5cm 8.4cm},clip]{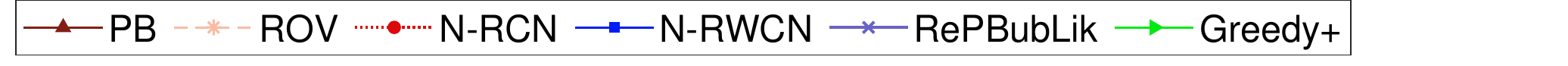}}}
\caption{The y-axis shows the performance of the algorithms in reducing the functions $g$ (first row) and $f$ (second row) in terms of the number of newly added shortcut edges $k$. The x-axis shows the number of new edges $k$ as a fraction of $|R|$. \label{fig:expk}}
\end{figure*}

\section{Experimental Evaluation}
\label{sec:experiments}

Although our work is mostly theoretical, we will compare the performance of our most scalable algorithm  \greedyplus with several baselines on real-life datasets. In particular, we want to measure how \greedyplus reduces the objective functions of \BMAH and \BMMH when we add an incrementally larger set of shortcut edges to the graph, and we want to verify if other existing algorithms are effective for these two tasks.

\pttitle{Baselines} We will compare with the fastest variant of the \texttt{RePBubLik} algorithm, which is the \texttt{RePBubLik+} algorithm \cite{repbub}, as well as with three simplified variants of \texttt{RePBubLik+}. We set the \emph{parochial} nodes \cite{repbub} of the algorithm equal to $R$.
The \texttt{RePBubLik+} algorithm ranks the parochial nodes according to their random-walk centrality (only computed once), and includes a penalty factor that favours the insertion of new shortcut edges to red nodes that have not been shortcut before. The first variant, \emph{PureRandom} \cite{repbub} selects the endpoints of the new edges uniformly at random from $R$ and $B$. The second variant is \emph{Random Top-$N$ Central Nodes ($N$-RCN)} \cite{repbub}, which sorts the top $N$ red nodes in order of descending random-walk centrality and picks $k$ source nodes uniformly from this set of $N$ nodes. The third variant is \emph{Random Top-$N$ Weighted Central Nodes ($N$-RWCN)}, which accounts additionally accounts for the random-walk transition probabilities in the ordering \cite{repbub}. Finally, we compare with the \emph{ROV} algorithm \cite{garimella2017reducing}. ROV outputs $k$ shortcut edges such that their addition aims to minimize the Random Walk Controversy (RWC) score \cite{garimella2018quantifying} of the augmented graph.
The RWC score tries to capture how well separated the two groups are with respect to a certain controversial topic.
It considers candidate edges between high-degree vertices of both groups $R$ and $B$.
Then the top-$k$ edges are picked with respect to the RWC score. 
All parameters of the aforementioned algorithms are set to their standard settings.


\begin{table}[t]
  \caption{Datasets used in the experiments. For each network, we extracted the largest connected component. $|E|_{R \leftrightarrow B}$ denotes the number of inter-group edges, and $|E|_{tot}$ the total number of edges.}
  \label{tab:datasets}
  \small
  \begin{tabular}{lrrcc}
    \toprule
    Data  & $|R|$ & $|B|$ & $|E|_{R \leftrightarrow B}$ & $|E|_{\text{tot}}$    \\
    \midrule
    \emph{Wiki Guns} \cite{menghini2020auditing}  & 134 & 117 & 132 & 550\\
    \emph{Wiki Abort.} \cite{menghini2020auditing}  & 208 & 396 & 232 & 1585\\
    \emph{Amazon Mate} \cite{snapnets}   & 160 & 11 & 16 & 287\\
    \emph{Amazon MiHi} \cite{snapnets}   & 25 & 63 & 56 & 146\\
    \emph{Wiki Sociol.} \cite{menghini2020auditing}  & 648 & 2588 & 430 & 8745\\
  \bottomrule
\end{tabular}
\end{table}

\pttitle{Experimental setup} All experiments are performed on an Intel\,core\,i5 machine at~1.8 GHz with 16\,GB\,RAM. 
Our methods are implemented in Python~3.8 and we made publicly available.\footnote{https://anonymous.4open.science/r/KDD-2023-Source-Code-0E2C/}
We use the same datasets as \texttt{RePBubLik}~\cite{repbub}, see Table~\ref{tab:datasets}.

\pttitle{Implementation of \greedyplus} We set $\epsilon = \lambda = 0.1$ in our experiments. In order to draw a fair comparison with the other baselines, we run \greedyplus for $k$ steps and not more.
We also simplify the way we calculate the average $g$ over all the red nodes in each iteration.
Lemma~\ref{lem:p1} states that we need to run several walks starting from each $r \in R$ and then compute the empirical average, but we only do this for a randomly chosen subset of $|R|/10$ red nodes. This significantly sped up the process, without sacrificing too much in~quality.

\pttitle{Results} Figure~\ref{fig:expk} shows the performance of the proposed algorithms and baselines.
On the $x$-axis it shows the number of new shortcut edges $k$ as a fraction of the total number of red nodes $|R|$. We observe that our algorithm \greedyplus is competitive, performing very similar to \texttt{RePBubLik+}, for minimizing both objective functions $f$ and $g$. Both these algorithms outperform the other baselines. The slightly better performance for \texttt{RePBubLik+} in some cases might be due to specific parameter settings, as well as the fact \texttt{RePBubLik+} is repeated 10 times whereas our algorithm only once. Intuitively \texttt{RePBubLik+} and \greedyplus are in fact expected to perform very similar since they both are practical approximative versions (with different ways of estimating) of the same underlying greedy algorithm: picking the next edge that maximizes the \texttt{RePBubLik+} objective function is theoretically the same edge that \greedy (see Algorithm~\ref{A:GreedyBMAH}) would select. We note that the y-axis shows the \emph{exact} function evaluations of $f$ and $g$, which was done by solving a linear system with $\bigO(n)$ variables (see Sect.~\ref{ss:spg}), which was relatively time consuming and an important reason why we restricted ourselves to smaller datasets.

\section{Conclusion}
\label{sec:conclusion}
In this paper we studied the problem of minimizing average hitting time and maximum hitting time
between two disparate groups in a network by adding new edges between pairs of nodes. 
In contrast to previous methods that modify the objective so that it becomes a submodular function and the optimization becomes straightforward, we minimize hitting time directly. Our approach leads to having a more natural objective 
at the cost of a more challenging optimization problem. For the two problems we define we present several observations and new ideas that lead to novel algorithms with provable approximation guarantees.
For average hitting time we show that the objective is super\-modular
and we apply a known bi\-criteria greedy method;
furthermore, we show how to efficiently approximate the computation of the greedy step 
by sampling bounded-length random walks.
For maximum hitting time, 
we show that it relates to average hitting time, 
and thus, we can reuse the greedy method. 
In addition, we also demonstrate a connection with the 
asymmetric $k$-center problem. 

\begin{acks}
This research is supported by the Helsinki Institute for Information Technology HIIT, Academy of Finland projects AIDA (317085) and MLDB (325117), the ERC Advanced Grant REBOUND (834862), the EC H2020 RIA project SoBigData++ (871042), and the Wallenberg AI, Autonomous Systems and Software Program (WASP) funded by the Knut and Alice Wallenberg Foundation.
\end{acks}

\balance
\bibliographystyle{ACM-Reference-Format}
\bibliography{kdd-hitting}

\end{document}